\DeclareMathSymbol{\lsb@l}{\mathalpha}{letters}{`l}
\newfont{\authfntsmall}{phvr at 11pt}
\newfont{\eaddfntsmall}{phvr at 9pt}
\setlist[description]{font=\normalfont\itshape,itemsep=0ex,partopsep=0ex}
\newtheorem{thm}{Theorem}
\newtheorem{prop}[thm]{Proposition}
\newtheorem{lem}[thm]{Lemma}
\newtheorem{coro}[thm]{Corollary}
\newtheorem{fact}[thm]{Fact}
\newenvironment{algoenv}[3][\linewidth]{
\begin{minipage}{#1} %
\flushleft
\rule{\textwidth}{.08em}\vspace{-\baselineskip}\smallskip
\begin{description}[noitemsep]
\item[\rlap{Input}\phantom{Output}] #2
\item[Output] #3
\end{description}
\vspace{-\baselineskip}
\rule{\textwidth}{.05em}
\begin{algorithmic}}
{\end{algorithmic}
\vspace{-.5\baselineskip}
\rule{\textwidth}{.08em}
\end{minipage}}
\newcommand{\bigO}{{{O}}}
\newcommand{\softO}{\tilde{\bigO}}
\def\le{\leqslant}
\def\ge{\geqslant}
\def\AA{\mathbb{A}}
\def\KK{\mathbb{K}}
\def\NN{\mathbb{N}}
\def\QQ{\mathbb{Q}}
\def\ZZ{\mathbb{Z}}
\newcommand{\rec}{\operatorname{rec}}
\newcommand{\Newton}{\mathcal{N}}
\newcommand{\Residue}{\operatorname{Residue}}
\newcommand{\Resultant}{\operatorname{Resultant}}
\newcommand{\bideg}{\operatorname{bideg}}
\newcommand{\Diag}{\operatorname{Diag}}
\newcommand{\slope}{\operatorname{ddeg}}
\newcommand{\nsmall}{\operatorname{Nsmall}}
\newcommand{\val}{\operatorname{val}}
\def\gathen#1{{#1}}
\title{Algebraic Diagonals and Walks}
\author{
  \alignauthor
  Alin Bostan\\
  \affaddr{Inria}\\
  \affaddr{France}\\
  \email{Alin.Bostan@inria.fr}
  \alignauthor
  Louis Dumont\\
  \affaddr{Inria}\\
  \affaddr{France}\\
  \email{Louis.Dumont@inria.fr}
  \alignauthor
  Bruno Salvy\\
  \affaddr{Inria, Laboratoire LIP}\\
  \affaddr{(U.~Lyon,~CNRS,~ENS~Lyon,~UCBL)}\\
  \affaddr{France}\\
  \email{Bruno.Salvy@inria.fr}
}
\begin{document}
\newfont{\mycrnotice}{ptmr8t at 7pt}
\newfont{\myconfname}{ptmri8t at 7pt}
\let\crnotice\mycrnotice%
\let\confname\myconfname%

\permission{Publication rights licensed to ACM. ACM acknowledges that this contribution was authored or co-authored by an employee, contractor or affiliate of a national government. As such, the Government retains a nonexclusive, royalty-free right to publish or reproduce this article, or to allow others to do so, for Government purposes only.}

%\conferenceinfo{ISSAC'15,} {July 6--9, 2015, Bath, United Kingdom.}  
%\copyrightetc{Copyright \copyright~2015 ACM \the\acmcopyr}
%\crdata{978-1-4503-3435-8/15/07\ ...\$15.00.\\
%DOI: http://dx.doi.org/10.1145/2755996.2756663}

\clubpenalty=10000 
\widowpenalty = 10000

\setlength{\belowdisplayskip}{.3\baselineskip} \setlength{\belowdisplayshortskip}{0pt}
\setlength{\abovedisplayskip}{.36\baselineskip} \setlength{\abovedisplayshortskip}{0pt}

\maketitle

\begin{abstract}
The diagonal of a multivariate power series $F$ is the univariate power series $\Diag F$ generated by the diagonal terms of $F$. Diagonals form an important class of power series; they occur frequently in number theory, theoretical physics and enumerative combinatorics. We study algorithmic questions related to diagonals in the case where $F$ is the Taylor expansion of a bivariate rational function. It is classical that in this case $\Diag F$ is an algebraic function. We propose an algorithm that computes an annihilating polynomial for $\Diag F$. Generically, it is its minimal polynomial and is obtained in time quasi-linear in its size. We show that this minimal polynomial has an exponential size with respect to the degree of the input rational function. We then address the related problem of enumerating directed lattice walks. The insight given by our study leads to a new method for expanding the generating power series of bridges, excursions and meanders. We show that their first $N$ terms can be computed in quasi-linear complexity in $N$, without first computing a very large polynomial equation.
\end{abstract}

\vspace{12mm}
\noindent
{\bf Categories and Subject Descriptors:} \\
\noindent I.1.2 [{\bf Computing Methodologies}]: Symbolic and
Algebraic Manipulations --- \emph{Algebraic Algorithms}

\vspace{12mm}
\noindent {\bf General Terms:} Algorithms, Theory.

\vspace{12mm}
\noindent {\bf Keywords:} Diagonals, walks, algorithms.

\section{Introduction}
\smallskip\noindent{\bf Context.}
The \emph{diagonal} of a multivariate power series with coefficients~$a_{i_1,\dots,i_k}$ is the univariate power series with coefficients~$a_{i,\dots,i}$. Particularly interesting is the class of diagonals of \emph{rational} power series (ie, Taylor expansions of rational functions). In particular,  diagonals of \emph{bivariate} rational power series are always roots of  nonzero bivariate polynomials (ie, they are algebraic series)~\cite{Polya1921,Furstenberg1967}.
Since it is also classical that  algebraic series are D-finite (ie, satisfy linear differential equations with polynomial coefficients),  their coefficients satisfy  linear recurrences and this leads to an optimal algorithm for the computation of their first terms~\cite{ChudnovskyChudnovsky1986,ChudnovskyChudnovsky1987a,BostanChyzakLecerfSalvySchost2007}.
In this article, we determine the degrees of these polynomials, the cost of their computation and related applications.

\smallskip\noindent{\bf Previous work.} The algebraicity of bivariate diagonals is classical. 
The same is true for the converse; also the property persists for multivariate rational series in positive characteristic~\cite{Furstenberg1967,Safonov87,DenefLipshitz1987}.
The first occurrence we
are aware of in the literature is P\'olya's article~\cite{Polya1921}, which
deals with a particular class of bivariate rational functions; the proof uses elementary
complex analysis. Along the lines of P\'olya's approach, Furstenberg~\cite{Furstenberg1967} gave a (sketchy) proof of the general result, over the
field of complex numbers; the same argument has been enhanced later~\cite{HautusKlarner1971},\cite[\S6.3]{Stanley99}. 
Three more different proofs exist: a purely algebraic one that works 
over arbitrary fields of characteristic zero~\cite[Th. 6.1]{Gessel80} (see also~\cite[Th. 6.3.3]{Stanley99}), one based on non-commutative power series~\cite[Prop. 5]{Fliess74}, and a combinatorial proof~\cite[\S3.4.1]{BousquetMelou06}.
Despite the richness of the topic and the fact that most proofs are constructive in essence, we were not able to find in the literature any \emph{explicit} algorithm for computing a bivariate polynomial that cancels the diagonal of a general bivariate rational function.

Diagonals of rational functions appear  naturally in enumerative combinatorics. In particular, the enumeration of unidimensional walks has been the subject of recent activity, see~\cite{BanderierFlajolet2002} and the references therein.  The
algebraicity of generating functions attached to such walks is classical as
well, and related to that of bivariate diagonals. Beyond this structural
result, several quantitative and effective results are known. Explicit
formulas give the generating functions in terms of implicit algebraic functions
attached to the set of allowed steps in the case of
excursions~\cite[\S4]{BoPe00},\cite{Gessel80}, 
bridges and meanders~\cite{BanderierFlajolet2002}. Moreover, if $a$ and $b$ denote the upper and lower amplitudes of the allowed steps, the bound $d_{a,b}=\binom{a+b}{a}$ on the degrees of equations for excursions has been obtained by Bousquet-M\'elou, and showed to be tight for a specific family of step sets, as well as generically~\cite[\S2.1]{Bousquet2006}. From the algorithmic viewpoint, Banderier and
Flajolet gave an algorithm (called the
\emph{Platypus Algorithm}) for computing a polynomial of degree $d_{a,b}$ that
annihilates the generating function for excursions~\cite[\S2.3]{BanderierFlajolet2002}. 

\smallskip\noindent{\bf Contributions.}
We design (Section~\ref{sec:diagonals}) the first explicit algorithm for computing a polynomial equation for the diagonal of an arbitrary bivariate rational function. We analyze its complexity and the size of its output in Theorem~\ref{thm:bound diagonals}. The algorithm has two main steps. The first step is the computation of a polynomial equation for the residues of a bivariate rational function. We propose an efficient algorithm for this task, that is a polynomial-time version of Bronstein's algorithm~\cite{Bronstein92}; corresponding size and complexity bounds are given in Theorem~\ref{th:Bronstein}.
The second step is the computation of a polynomial equation for the sums of a fixed number of roots of a given polynomial. We design an additive version of the Platypus algorithm~\cite[\S2.3]{BanderierFlajolet2002} and analyze it in Theorem~\ref{thm:platypus-bound}. We show in Proposition~\ref{prop:generic} that generically, the size of the minimal polynomial for the diagonal of a rational function is exponential in the degree of the input and that our algorithm computes it in quasi-optimal complexity (Theorem~\ref{thm:bound diagonals}).

 In the application to walks, we show how to expand to high precision the generating functions of bridges, excursions and meanders.
Our main message is that pre-computing a polynomial equation for them is too costly, since that equation might have exponential size in the 
maximal amplitude~$d$ of the allowed steps. 
Our algorithms have quasi-linear complexity in the precision of the expansion, while keeping the pre-computation step in polynomial complexity in~$d$ 
(Theorem~\ref{thm:walks}).

\smallskip\noindent{\bf Structure of the paper.}
After a preliminary section on background and notation, we first discuss several special bivariate resultants of broader general interest in Section~\ref{sec:resultants}. Next, we consider 
diagonals, the size of their {minimal} polynomials and an efficient way of computing annihilating polynomials in Section~\ref{sec:diagonals}. 

\section{Background and Notation}
In this section, 
that might be skipped at first reading,
we introduce notation and technical results that will be used throughout the article.
\subsection{Notation}
In this article, $\KK$ denotes a field of characteristic~0. 
We denote by $\KK[x]_n$ the set of polynomials in $\KK[x]$ of degree less than~$n$. Similarly, $\KK(x)_n$ stands for the set of rational functions in $\KK(x)$ with numerator and denominator in $\KK[x]_n$, and $\KK[[x]]_n$ for the set of power series in $\KK[[x]]$ truncated at precision~$n$.

If~$P$ is a polynomial in $\KK[x,y]$, then its degree with respect to $x$ (resp. $y$) is denoted $\deg_xP$ (resp. $\deg_yP$), and the \emph{bidegree} of $P$ is the pair~$\bideg P=(\deg_xP,\deg_yP)$. The notation $\deg$ is used for univariate polynomials. Inequalities between bidegrees are component-wise. The set of polynomials in $\KK[x,y]$ of bidegree less than $(n,m)$ is denoted by $\KK[x,y]_{n,m}$, and similarly for more variables.

The \emph{valuation} of a polynomial~$F\in\KK[x]$ or a power series~$F\in\KK[[x]]$ is its smallest exponent with nonzero coefficient. It is denoted $\val F$, with the convention~$\val 0=\infty$.

The \emph{reciprocal} of a polynomial~$P\in\KK[x]$ is the polynomial $\rec(P)=x^{\deg P}P(1/x)$.
If $P=c(x-\alpha_1)\dotsm(x-\alpha_d)$, the notation $\Newton(P)$ stands for the generating series of the \emph{Newton sums} of $P$:
\[\Newton(P)=\sum_{n\ge 0}{(\alpha_1^n+\alpha_2^n+\dots+\alpha_d^n)x^n}.\]
A \emph{squarefree decomposition} of a nonzero polynomial $Q\in\AA[y]$, where $\AA=\KK$ or $\KK[x]$, is a factorization~$Q=Q_1^1\dotsm Q_m^m$, with $Q_i\in\AA[y]$ squarefree, the $Q_i$'s pairwise coprime and~$\deg_y(Q_m)>0$. The corresponding \emph{squarefree part} of~$Q$ is the polynomial~$Q^\star=Q_1\dotsm Q_m$. If $Q$ is squarefree then $Q = Q^\star$.

The coefficient of~$x^n$ in a power series $A\in\KK[[x]]$ is denoted $[x^n]A$. 
If $A=\sum_{i=0}^{\infty}{a_ix^i}$, then $A\bmod x^n$ denotes the polynomial $\sum_{i=0}^{n-1}{a_ix^i}$.
The exponential series $\sum_n x^n/n!$ is denoted $\exp(x)$.
The \emph{Hadamard product} of two power series~$A$ and~$B$ is the power series~$A\odot B$ such that $[x^n]A\odot B=[x^n]A\cdot[x^n]B$ for all $n$. 

If $F(x,y) = \sum_{i,j\ge 0}{f_{i,j}x^iy^j}$ is a bivariate power series in $\KK[[x,y]]$, 
the \emph{diagonal} of $F$, denoted $\Diag F$ is the univariate power series in~$\KK[[t]]$ defined by
$\Diag F(t) = \sum_{n\ge 0}{f_{n, n}t^n}.$

\subsection{Bivariate Power Series}
In several places, we need bounds on degrees of coefficients of bivariate rational series. In most cases, these power series belong to~$\KK(x)[[y]]$ and have a very constrained structure: there exists a polynomial~$Q\in\KK[x]$ and an integer~$\alpha\in\NN$ such that the power series can be written
\[c_0+c_1\frac{y}{Q}+\dotsb+c_n\frac{y^n}{Q^n}+\dotsb,\]
with $c_n\in\KK[x]$ and $\deg c_n\le n\alpha$, for all $n$. We denote by $\mathcal{E}_\alpha(Q)$ the set of such power series. Its main properties are summarized as follows.
\begin{lem} \label{growth}
	Let $Q, R\in\KK[x]$, $\alpha, \beta\in\NN$ and $f\in\KK[[y]]$.
	\begin{compactenum}
		\item[(1)] The set $\mathcal{E}_\alpha(Q)$ is a subring of~$\KK(x)[[y]]$;		
		\item[(2)] Let $S\in\mathcal{E}_\alpha(Q)$ with $S(0)=0$, then $f(S)\in\mathcal{E}_\alpha(Q)$;
		\item[(3)] The products obey
		\[\mathcal{E}_\alpha(Q)\cdot\mathcal{E}_\beta(R) \subset \mathcal{E}_{\max(\alpha+\deg R,\,\beta+\deg Q)}(QR).\]
	\end{compactenum} 
\end{lem}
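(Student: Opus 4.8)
The plan is to treat the three items in order, using in each case the defining property of $\mathcal{E}_\alpha(Q)$: that an element can be written $\sum_{n\ge 0} c_n y^n/Q^n$ with $c_n\in\KK[x]$ and $\deg c_n\le n\alpha$. For item~(1), closure under addition is immediate since the coefficient of $y^n/Q^n$ in a sum is $c_n+c_n'$, still of degree at most $n\alpha$. For the product of $A=\sum a_i y^i/Q^i$ and $B=\sum b_j y^j/Q^j$ in $\mathcal{E}_\alpha(Q)$, the coefficient of $y^n/Q^n$ is $\sum_{i+j=n} a_i b_j$, and $\deg(a_i b_j)\le i\alpha+j\alpha=n\alpha$; one also has to note that the ring contains $\KK(x)$-like constants only insofar as $c_0\in\KK[x]$, but in fact the constant~$1$ lies in $\mathcal{E}_\alpha(Q)$ (take $c_0=1$, $c_n=0$), so $\mathcal{E}_\alpha(Q)$ is a subring of $\KK(x)[[y]]$.

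For item~(2), write $f(z)=\sum_{k\ge 0} f_k z^k$. Since $S\in\mathcal{E}_\alpha(Q)$ has $S(0)=0$, its expansion has the form $S=\sum_{n\ge 1} s_n y^n/Q^n$, i.e. $S$ lies in the ideal $y\,\mathcal{E}_\alpha(Q)$ (more precisely $\val_y S\ge 1$), so each power $S^k$ has $y$-valuation at least $k$ and the sum $\sum_k f_k S^k$ converges in $\KK(x)[[y]]$. By item~(1) each $S^k\in\mathcal{E}_\alpha(Q)$, and because the term $S^k$ contributes only to coefficients of $y^n/Q^n$ with $n\ge k$, the infinite sum $f(S)=\sum_k f_k S^k$ is a well-defined element of $\KK(x)[[y]]$ whose coefficient of $y^n/Q^n$ is a finite sum of polynomials each of degree at most $n\alpha$; hence $f(S)\in\mathcal{E}_\alpha(Q)$.

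For item~(3), take $A=\sum a_i y^i/Q^i\in\mathcal{E}_\alpha(Q)$ and $B=\sum b_j y^j/R^j\in\mathcal{E}_\beta(R)$. The coefficient of $y^n$ in $AB$ is $\sum_{i+j=n} a_i b_j/(Q^i R^j)$; to put this over the common denominator $(QR)^n$ one multiplies the $(i,j)$ term by $Q^{n-i}R^{n-j}=Q^j R^i$, giving numerator $\sum_{i+j=n} a_i b_j\, Q^j R^i$. The degree of the $(i,j)$ summand is at most $i\alpha + j\beta + j\deg Q + i\deg R = i(\alpha+\deg R) + j(\beta+\deg Q)\le n\max(\alpha+\deg R,\ \beta+\deg Q)$, which is exactly the bound needed to conclude $AB\in\mathcal{E}_{\max(\alpha+\deg R,\,\beta+\deg Q)}(QR)$.

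The only mildly delicate point is the convergence/well-definedness bookkeeping in item~(2): one must check that composing with $f$ does not require dividing by a power of $x$ or introduce denominators other than powers of $Q$, which is precisely guaranteed by $S(0)=0$ together with the subring property from item~(1); everything else is a direct degree count. I expect item~(2) to be the main obstacle, and items~(1) and~(3) to be routine.
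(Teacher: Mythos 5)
Your proposal is correct and follows essentially the same approach as the paper's proof: item~(3) by clearing denominators to $(QR)^n$ and bounding the degree of the resulting numerator term by term, item~(1) by a direct convolution/degree count, and item~(2) by observing that $S(0)=0$ makes $f(S)$ well-defined with coefficients of $y^n/Q^n$ determined by the finite sum $\sum_{k\le n}f_kS^k$ and then invoking item~(1). The paper states this more tersely, proving (3) in detail and remarking that (1) is similar and that (2) follows from (1) via well-definedness; you have merely filled in those routine details.
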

\begin{proof} 
For~{\em(3)}, if $A = \sum_n{a_ny^n/Q^n}$ and $B = \sum_n{b_ny^n/R^n}$ belong respectively to~$\mathcal{E}_\alpha(Q)$ and $\mathcal{E}_\beta(R)$, then the $n$th coefficient of their product is a sum of terms of the form $a_i(x)Q^{n-i} b_{n-i}(x)R^i/(QR)^n$. Therefore, the degree of the numerator is  bounded by 
$i(\alpha+\deg R)+(n-i)(\beta+\deg Q)$, whence {\em(3)} is proved. Property~{\em(1)} is proved similarly.
In Property~{\em(2)}, the condition on~$S(0)$ makes~$f(S)$ well-defined. The result follows from~{\em(1)}.
\end{proof}
As consequences, we deduce the following two results.
\begin{coro}
\label{coro:invpol}
Let~$Q\in\KK[x,y]$ with $q(x)=Q(x,0)$ be such that~$q(0)\neq0$. Let $Q^\star$ be a squarefree part of $Q$. Then 
\[\frac{1}{Q}\in\frac{1}{q}\mathcal{E}_{\deg_x Q^\star}(Q^\star(x,0)).\]
\end{coro}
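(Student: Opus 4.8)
The plan is to reduce to the case of a single squarefree factor via the squarefree decomposition $Q=Q_1^1\dotsm Q_m^m$, treat each factor with Lemma~\ref{growth}(2), and recombine with Lemma~\ref{growth}(3).

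\emph{Step 1: one factor at a time.} Let $P\in\KK[x,y]$ be any polynomial whose $y$-constant coefficient $p:=P(x,0)$ is a nonzero polynomial, and write $P=p\,(1+S)$ with $S:=(P-p)/p\in\KK(x)[y]$. If $P=\sum_j a_j(x)y^j$, so $a_0=p$, then $S=\sum_{j\ge1}\big(a_j(x)p^{j-1}\big)\,y^j/p^j$ and $S(x,0)=0$; since $\deg a_j\le\deg_x P$ and $\deg p\le\deg_x P$, the numerator $a_j p^{j-1}$ has degree at most $j\deg_x P$. Hence $S\in\mathcal{E}_{\deg_x P}(p)$ with $S(0)=0$, and Lemma~\ref{growth}(2) applied with $f(u)=1/(1+u)$ gives $1/(1+S)\in\mathcal{E}_{\deg_x P}(p)$, whence $1/P\in\tfrac1p\,\mathcal{E}_{\deg_x P}(p)$.

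\emph{Step 2: recombination.} Put $q_i:=Q_i(x,0)$ and $\alpha_i:=\deg_x Q_i$. From $q(0)=\prod_i q_i(0)^i\neq0$ each $q_i(0)\neq0$, so Step~1 applies to every $Q_i$: $1/Q_i\in\tfrac1{q_i}\mathcal{E}_{\alpha_i}(q_i)$. Since $\mathcal{E}_{\alpha_i}(q_i)$ is a ring (Lemma~\ref{growth}(1)), raising to the $i$-th power gives $1/Q_i^i\in\tfrac1{q_i^i}\mathcal{E}_{\alpha_i}(q_i)$. It remains to multiply these $m$ memberships together: using Lemma~\ref{growth}(3), the inequalities $\deg q_i\le\alpha_i$, and the evident monotonicity $\mathcal{E}_{\gamma}(R)\subseteq\mathcal{E}_{\gamma'}(R)$ for $\gamma\le\gamma'$, a short induction yields $\prod_i\mathcal{E}_{\alpha_i}(q_i)\subseteq\mathcal{E}_{\sum_i\alpha_i}\!\big(\prod_i q_i\big)$. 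Therefore
\[\frac1Q=\prod_{i=1}^m\frac1{Q_i^i}\in\frac1{\prod_i q_i^i}\,\mathcal{E}_{\sum_i\alpha_i}\!\Big(\prod_i q_i\Big)=\frac1q\,\mathcal{E}_{\deg_x Q^\star}\big(Q^\star(x,0)\big),\]
since $\prod_i q_i^i=Q(x,0)=q$, $\prod_i q_i=Q^\star(x,0)$, and $\sum_i\alpha_i=\deg_x(\prod_iQ_i)=\deg_x Q^\star$ (replacing $Q^\star$ by a constant multiple changes nothing, as $\mathcal{E}_\alpha(cR)=\mathcal{E}_\alpha(R)$).

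The individual computations are routine. The one place deserving attention is that each use of Lemma~\ref{growth}(3) inflates the growth parameter by the degree of the other modulus, so one must check that these increments accumulate to exactly $\deg_x Q^\star$ and no more — which they do, because $\deg q_i\le\alpha_i$ makes $\max(\alpha+\deg R,\,\beta+\deg Q)$ never exceed $\alpha+\beta$. I do not foresee a genuine obstacle; the only real subtlety is that one \emph{cannot} dispatch $1/Q$ in one step through $Q=q(1+S)$, since that would leave powers of $q$ rather than of the smaller modulus $Q^\star(x,0)$ in the denominators — which is precisely why the squarefree splitting has to come first.
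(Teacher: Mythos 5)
Your proof is correct and follows essentially the same route as the paper's: first settle the squarefree case by writing $Q = q(1+S)$ and applying Lemma~\ref{growth}(2) with $f = 1/(1+y)$, then reassemble the general case from the squarefree decomposition using Lemma~\ref{growth}(1) and (3). You have simply unpacked the bookkeeping (in particular the check that $\deg q_i \le \alpha_i$ keeps the max in part (3) from overshooting $\sum_i\alpha_i$) that the paper's one-line ``the general case then follows from Parts (1,3)'' leaves to the reader.
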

\begin{proof}Write $Q=q+R$ with $R/q\in\mathcal{E}_{\deg_xQ}(q)$. Then the result when $Q$ is squarefree ($Q=Q^\star$) follows from Part~{\em(2)} of Lemma~\ref{growth}, with~$f=1/(1+y)$. 
The general case then follows from Parts~{\em(1,3)}.
\end{proof}
\begin{prop}\label{prop:derivatives-rational-function}
Let~$P$ and~$Q$ be polynomials in~$\KK[x,y]$, with $Q(0,0)\neq0$, $\deg_y Q>0$ and~$F=P/Q$. Then for all $n\in\NN$,
\[\frac{d^nF}{dy^n}=\frac{A}{Q(Q^\star)^n},\]
with $\bideg A\le \bideg P+n(\deg_xQ^\star,\deg_yQ^\star-1)$.
\end{prop}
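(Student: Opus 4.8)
The plan is to proceed by induction on $n$, the order of differentiation, writing the claimed shape $d^nF/dy^n = A/(Q(Q^\star)^n)$ and tracking the bidegree of the numerator $A$ at each step. The base case $n=0$ is trivial: $F = P/Q$ with $A = P$, and the bidegree bound reads $\bideg A \le \bideg P$, which holds with equality. For the inductive step, I would assume $d^nF/dy^n = A/(Q(Q^\star)^n)$ with $\bideg A \le \bideg P + n(\deg_x Q^\star,\deg_y Q^\star - 1)$ and differentiate once more with respect to $y$, then simplify the resulting rational function and bound the new numerator.

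The key computation is the following. Writing $D = Q(Q^\star)^n$, we have $\frac{d}{dy}\bigl(A/D\bigr) = \bigl(A' D - A D'\bigr)/D^2$, where primes denote $\partial/\partial y$. Now $D' = Q'(Q^\star)^n + nQ(Q^\star)^{n-1}(Q^\star)' = (Q^\star)^{n-1}\bigl(Q'Q^\star + nQ(Q^\star)'\bigr)$, so $(Q^\star)^{n-1}$ divides $D'$; and $(Q^\star)^n$ divides $D$. Hence $(Q^\star)^{n-1}$ can be cancelled from numerator and denominator of $(A'D - AD')/D^2$: the numerator $A'D - AD' = (Q^\star)^{n-1}\bigl(A' Q Q^\star - A(Q'Q^\star + nQ(Q^\star)')\bigr)$ and the denominator $D^2 = Q^2(Q^\star)^{2n}$, so after cancellation
\[
\frac{d^{n+1}F}{dy^{n+1}} = \frac{A'QQ^\star - A\bigl(Q'Q^\star + nQ(Q^\star)'\bigr)}{Q^2(Q^\star)^{n+1}}.
\]
Here I need one more divisibility fact: the squarefree part $Q^\star$ divides $Q$, so write $Q = Q^\star H$ for some $H\in\KK[x,y]$. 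Substituting, the numerator becomes $Q^\star\bigl(A'QH Q^\star/Q^\star \cdots\bigr)$ — more cleanly, every term of $A'QQ^\star - AQ'Q^\star - nAQ(Q^\star)'$ except possibly $-nAQ(Q^\star)'$ visibly contains a factor $Q^\star$, and $Q = Q^\star H$ shows $Q(Q^\star)'$ also carries a factor $Q^\star$. Thus $Q^\star$ divides the numerator, say numerator $= Q^\star \cdot \widetilde A$, and one factor of $Q$ in the denominator is matched by the factor $Q^\star H = Q$ hidden in $\widetilde A$ via the $H$ just introduced; after this second cancellation we arrive at $d^{n+1}F/dy^{n+1} = \widetilde A/\bigl(Q(Q^\star)^{n+1}\bigr)$ with $\widetilde A = \bigl(A'QQ^\star - A(Q'Q^\star + nQ(Q^\star)')\bigr)/(QQ^\star)$, which is a genuine polynomial.

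It remains to bound $\bideg \widetilde A$. Using $\deg_x(A') \le \deg_x A$ and $\deg_y(A') \le \deg_y A - 1$, and the bounds $\deg_x Q' \le \deg_x Q$, $\deg_y Q' \le \deg_y Q - 1$ and similarly for $(Q^\star)'$, each of the three terms in the numerator before the final division has $x$-degree at most $\deg_x A + \deg_x Q + \deg_x Q^\star$ and $y$-degree at most $\deg_y A + \deg_y Q + \deg_y Q^\star - 1$ (the "$-1$" coming from exactly one derivative in each product). Subtracting $\bideg(QQ^\star) = (\deg_x Q + \deg_x Q^\star,\deg_y Q + \deg_y Q^\star)$ gives $\deg_x \widetilde A \le \deg_x A + \deg_x Q^\star$ and $\deg_y \widetilde A \le \deg_y A + \deg_y Q^\star - 1$; plugging in the inductive hypothesis for $\bideg A$ yields exactly $\bideg \widetilde A \le \bideg P + (n+1)(\deg_x Q^\star,\deg_y Q^\star - 1)$, closing the induction. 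The one subtlety to handle with care — the part I expect to be the main obstacle — is making the two divisibility cancellations rigorous (that $Q^\star$ and then one more copy of $Q$ really divide the numerator), since a naive bound without performing the cancellations would lose a factor and give a weaker, non-tight estimate; the identity $Q = Q^\star H$ together with the product rule is what makes it work cleanly.
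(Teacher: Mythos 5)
Your induction on $n$ is a genuinely different, more elementary route than the paper's: the paper instead expands $F(x,y+t)$ as a power series in $t$ and transfers the bidegree bounds of Corollary~\ref{coro:invpol} (built on the $\mathcal{E}_\alpha$ machinery of Lemma~\ref{growth}) to the Taylor coefficients, recovering the extra ``$-1$'' in the $y$-degree from the observation that $Q(x,y+t)-Q(x,y)$ has $y$-degree at most $\deg_yQ-1$. An induction tracking the numerator through one more derivation can indeed be made to work, but the execution as written has a real gap. After cancelling $(Q^\star)^{n-1}$ the denominator is $Q^2(Q^\star)^{n+1}$ and the numerator is $N:=A'QQ^\star - A\bigl(Q'Q^\star + nQ(Q^\star)'\bigr)$; to land on the target denominator $Q(Q^\star)^{n+1}$ you must cancel a single factor of $Q$, i.e. set $\widetilde A = N/Q$. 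Writing $\widetilde A = N/(QQ^\star)$ is inconsistent with the announced denominator, and the underlying divisibility claim is simply false: $QQ^\star$ need not divide $N$. For instance $P=1$, $Q=(y-1)^2(y-2)$, $n=1$ gives $N = 2(y-1)^2(y-2)(6y^2-20y+17)$, which is not divisible by $QQ^\star=(y-1)^3(y-2)^2$, though it is divisible by $Q$.

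The fact you actually need is $Q\mid Q'Q^\star$, which you only gesture at. It does hold: from a squarefree decomposition $Q=\prod_i Q_i^i$ one computes $Q'Q^\star/Q = \sum_i iQ_i'\prod_{j\neq i}Q_j\in\KK[x,y]$, so $N = Q\bigl(A'Q^\star - A(Q'Q^\star/Q) - nA(Q^\star)'\bigr)$ and $\widetilde A = N/Q$ is a polynomial. The bidegree count then goes through: each term in that parenthesis adds at most $(\deg_xQ^\star,\deg_yQ^\star-1)$ to $\bideg A$, since $\bideg(Q'Q^\star/Q)\le(\deg_xQ^\star,\deg_yQ^\star-1)$ as well. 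Note your stated arithmetic --- subtracting $\bideg(QQ^\star)$ from the numerator bound --- does not produce the bound you write down (it would give $(\deg_xA,\deg_yA-1)$); subtracting $\bideg Q$ does, so the two errors cancel in the final answer but both must be repaired. In short: the inductive skeleton is sound and is a legitimate alternative to the paper's $\mathcal{E}_\alpha$ argument, but the divisor must be $Q$ (not $QQ^\star$) and the lemma $Q\mid Q'Q^\star$ needs to be stated and proved.
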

\begin{proof}
The Taylor expansion of $F(x,y+t)$ has for coefficients the derivatives of $F$. We consider it either in~$\KK(y)[x,t]$ or in ~$\KK(x)[y,t]$. Corollary~\ref{coro:invpol} applies directly for the degree in~$x$. 
The saving on the degree in~$y$ follows from observing that in the first part of the proof of the corollary, the decomposition~$Q(x,y+t)=Q(x,y)+R(x,y,t)$ has the property that $\deg_yR\le \deg_y Q-1$. This $-1$ is then propagated along the proof thanks to Part {\em(3)} of Lemma~\ref{growth}.
\end{proof}

\subsection{Complexity Estimates} \label{sec:complexity} 
We recall classical complexity notation and facts for later use. 
Let $\KK$ be again a field of characteristic zero.
Unless otherwise specified, we estimate the cost of our algorithms by counting
arithmetic operations in $\KK$ (denoted ``ops.'') at unit cost. 
The soft-O notation $\softO( \cdot)$ indicates that polylogarithmic factors
are omitted in the complexity estimates. We say that an algorithm has quasi-linear complexity if its complexity is $\softO(d)$, where $d$ is the maximal \emph{arithmetic size} (number of coefficients in~$\KK$ in a dense representation) of the input and of the output. In that case, the algorithm is said to be \emph{quasi-optimal}. 

\smallskip\noindent {\bf Univariate operations.} Throughout this article we will use the fact that most operations on polynomials, rational functions and power series in one variable can be performed in quasi-linear time.
Standard references for these questions are the books~\cite{GaGe03} and~\cite{BuClSh97}.
The needed results are summarized in Fact~\ref{fact:complexity} below.

\begin{fact}\label{fact:complexity}
The following operations can be performed in $\softO(n)$ ops. in $\KK$:
\begin{compactenum}
\item[(1)] addition, product and differentiation of elements in $\KK[x]_n$, $\KK(x)_n$ and $\KK[[x]]_n$; integration in $\KK[x]_n$ and $\KK[[x]]_n$;
\item[(2)] {extended gcd,}
 squarefree decomposition and resultant in $\KK[x]_n$;
\item[(3)] multipoint evaluation in $\KK[x]_n$, $\KK(x)_n$ at $O(n)$ points in $\KK$; interpolation in $\KK[x]_n$ and $\KK(x)_n$
from $n$ (resp. $2n-1$) values at pairwise distinct points in $\KK$;
\item[(4)] inverse, logarithm, exponential in  $\KK[[x]]_n$ (when defined);
\item[(5)] conversions between $P\in\KK[x]_n$ and $\Newton(P)\bmod x^n \in \KK[x]_n$.
\end{compactenum}
\end{fact}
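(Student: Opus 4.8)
The plan is to treat each of the five items as a standard reduction to fast polynomial multiplication, which I would take as the single primitive: two elements of $\KK[x]_n$ can be multiplied in $\softO(n)$ ops.\ in $\KK$ by an FFT-based algorithm that needs no roots of unity in the base field, so that the bound holds over an arbitrary field of characteristic zero. Granting this, item~(1) is essentially immediate: addition and differentiation of polynomials, power series and rational functions, and integration of polynomials and power series, cost $\bigO(n)$; a truncated power-series product is a truncated polynomial product; and each operation on $\KK(x)_n$ unfolds into $\bigO(1)$ operations on numerators and denominators in $\KK[x]_{\bigO(n)}$.

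Next I would dispatch item~(4) via Newton iteration: the inverse and the exponential of a power series are computed by quadratically convergent iterations whose total cost is dominated by the final doubling step, hence $\softO(n)$, and the logarithm of $A$ with $A(0)=1$ is $\int(A'/A)$, combining one inverse, one product and one integration. Item~(5) I would prove through the generating-function form of Newton's identities, namely the power-series identity $\Newton(P)=\deg P-x\,\rec(P)'/\rec(P)$ for monic $P$: going from $P$ to $\Newton(P)\bmod x^n$ is then one reciprocal, one derivative, one series inverse and one product, while the reverse direction recovers $\rec(P)$, and hence $P$, from the first $n$ power sums by one integration and one series exponential modulo $x^n$ --- all $\softO(n)$.

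For item~(2), I would invoke the fast half-gcd algorithm for the extended gcd, deduce the resultant from the same fast Euclidean remainder sequence together with the multiplicativity of the resultant along that sequence, and use Yun's algorithm for the squarefree decomposition, checking that it reduces to a bounded number of gcds, exact divisions and differentiations on polynomials of degree $\bigO(n)$. For item~(3), I would build the subproduct tree over the $\bigO(n)$ evaluation points and run the classical multipoint-evaluation algorithm (repeated Euclidean remaindering down the tree) and the classical interpolation algorithm (Lagrange-type recombination up the tree), each in $\softO(n)$; for rational functions one applies this to numerator and denominator separately, assuming none of the points is a pole, interpolation of an element of $\KK(x)_n$ from $2n-1$ values at distinct points then being a rational-reconstruction step, again reducible to fast polynomial arithmetic.

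I do not expect a genuine obstacle here, since everything is textbook material; the only thing that needs attention is bookkeeping --- verifying, operation by operation, that the reduction calls fast multiplication, inversion and gcd only $\softO(1)$ times on inputs of arithmetic size $\bigO(n)$, so that the quasi-linear bound survives --- and for the detailed analyses and the precise polylogarithmic factors I would simply refer to~\cite{GaGe03} and~\cite{BuClSh97} rather than reproduce them.
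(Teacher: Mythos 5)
Your sketch is correct, and it follows exactly the standard reductions to fast polynomial multiplication (Newton iteration for inverse/exp, the $\Newton(P)=\rec(P')/\rec(P)$ and $\rec(P)=\exp\int(d-\Newton(P))/y$ identities for item~(5), half-gcd for gcd/resultant, Yun for squarefree decomposition, subproduct trees for evaluation/interpolation and rational reconstruction). The paper itself offers no proof of this Fact — it simply defers to \cite{GaGe03} and \cite{BuClSh97} (and, for item~(5), later to \cite{Schonhage1982,Pan2000a,BoFlSaSc06}) — so your write-up is a correct fleshing-out of precisely the references the authors point to, not an alternative route.
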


\smallskip\noindent {\bf Multivariate operations.} 
Basic operations on polynomials, rational functions and power series in several variables are hard questions from the algorithmic point of view. For instance, no general quasi-optimal algorithm is currently known for computing resultants of bivariate polynomials, even though in several important cases such algorithms are available~\cite{BoFlSaSc06}.
Multiplication is the most basic non-trivial operation in this setting. The following result can be proved using Kronecker's substitution; it is quasi-optimal for fixed number of variables $m=O(1)$. 
\begin{fact}\label{fact:multiprod}
Polynomials in $\KK[x_1, \ldots, x_m]_{d_1, \ldots, d_m}$ 
and power series in $\KK[[x_1, \ldots, x_m]]_{d_1, \ldots, d_m}$
can be multiplied using \sloppy $\softO(2^m d_1 \cdots d_m)$ ops.
\end{fact}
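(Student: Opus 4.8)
The plan is to reduce the multivariate product to a single univariate one by Kronecker's substitution, and then to invoke the quasi-linear bound of Fact~\ref{fact:complexity}(1). Assume each $d_i\ge 1$, the statement being trivial otherwise. First I would dispose of the power-series case: an element of $\KK[[x_1,\ldots,x_m]]_{d_1,\ldots,d_m}$ is nothing but a polynomial of bidegree less than $(d_1,\ldots,d_m)$, and its truncated product with another such element is obtained from the ordinary polynomial product by discarding the monomials in which some~$x_i$ occurs with exponent~$\ge d_i$, which costs no operation in~$\KK$. So it suffices to multiply two polynomials $A,B\in\KK[x_1,\ldots,x_m]_{d_1,\ldots,d_m}$, whose product~$C$ has bidegree less than $(2d_1-1,\ldots,2d_m-1)$.

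Next I would fix the Kronecker exponents $e_1=1$ and $e_i=\prod_{j<i}(2d_j-1)$ for $2\le i\le m$, and consider the ring homomorphism $\phi\colon\KK[x_1,\ldots,x_m]\to\KK[X]$ defined by $x_i\mapsto X^{e_i}$. The key observation is that the map $\gamma\mapsto\sum_i\gamma_ie_i$ is injective on the set of exponent vectors~$\gamma$ with $0\le\gamma_i\le 2d_i-2$, because $\sum_i\gamma_ie_i$ is precisely the mixed-radix encoding of~$\gamma$ in the radices $2d_1-1,\ldots,2d_m-1$. Consequently each coefficient of~$C$ equals a single, explicitly located coefficient of $\phi(C)=\phi(A)\,\phi(B)$, so $C$ is read back from $\phi(C)$ by decoding exponents — a pure bookkeeping step involving $O(2^md_1\cdots d_m)$ data manipulations and no arithmetic in~$\KK$.

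Finally, $\phi(A)$ and $\phi(B)$ are univariate polynomials of degree less than $\prod_i(2d_i-1)\le 2^md_1\cdots d_m$, so the univariate product $\phi(A)\,\phi(B)$ is computed in $\softO(2^md_1\cdots d_m)$ ops.\ by Fact~\ref{fact:complexity}(1), which clearly dominates the whole cost. There is no real obstacle here, the argument being the classical one; the only points deserving attention are to take the~$e_i$ just large enough that no two monomials of the product~$C$ collide (whence the factor~$2$, rather than~$1$, in each radix), and to bound $\prod_i(2d_i-1)$ crudely by $2^m\prod_i d_i$.
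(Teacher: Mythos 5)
Your proposal is correct and takes exactly the approach the paper indicates (the paper's ``proof'' is the one-line remark that the result follows from Kronecker's substitution). Your careful choice of radices $2d_i-1$, the injectivity argument, the reduction to a univariate product handled by Fact~\ref{fact:complexity}(1), and the observation that truncation for the power-series case is free, are all as expected.
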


A related operation is multipoint evaluation and interpolation. The simplest case is when the evaluation points form an $m$-dimensional tensor product grid $I_1 \times \cdots \times I_m$, where $I_j$ is a set of cardinal $d_j$. 

\begin{fact}\label{fact:multieval}\cite{Pan94}
Polynomials in $\KK[x_1, \ldots, x_m]_{d_1, \ldots, d_m}$ can be evaluated and interpolated from values that they take on $d_1 \cdots d_m$ points that form an $m$-dimensional tensor product grid using $\softO(m d_1 \cdots d_m)$ ops.
\end{fact}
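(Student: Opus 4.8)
The plan is to reduce the $m$-dimensional problem to a sequence of one-dimensional problems, handled coordinate by coordinate, and to invoke the univariate fast multipoint evaluation and interpolation recalled in Part~(3) of Fact~\ref{fact:complexity}. Write the polynomial $P \in \KK[x_1,\dots,x_m]_{d_1,\dots,d_m}$ as an element of $\KK[x_2,\dots,x_m][x_1]$, i.e. group its coefficients by powers of $x_1$. Evaluating $P$ at all points of $I_1 \times \cdots \times I_m$ can be done by first evaluating, for each of the $d_2\cdots d_m$ coefficient polynomials (which live in $\KK[x_2,\dots,x_m]_{d_2,\dots,d_m}$), nothing — rather, one should think of it the other way: fix the first coordinate last. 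Concretely, I would recurse on $m$. For each fixed choice of $(a_2,\dots,a_m)\in I_2\times\cdots\times I_m$ one obtains a univariate polynomial in $\KK[x_1]_{d_1}$ whose evaluation on $I_1$ costs $\softO(d_1)$ by Fact~\ref{fact:complexity}(3); but assembling these univariate polynomials is itself a multivariate evaluation in the remaining $m-1$ variables applied to each of the $d_1$ coefficients of $P$ viewed in $\KK[x_1]$. This gives the recurrence
\[
C_m(d_1,\dots,d_m) \le d_1\, C_{m-1}(d_2,\dots,d_m) + d_2\cdots d_m\cdot \softO(d_1),
\]
with $C_1(d) = \softO(d)$, which unrolls to $C_m = \softO(m\, d_1\cdots d_m)$ after summing the $m$ contributions, each of size $\softO(d_1\cdots d_m)$ up to logarithmic factors.

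For interpolation I would run exactly the same recursion in reverse: given the values of $P$ on the grid, first interpolate in the variable $x_1$ along each of the $d_2\cdots d_m$ fibers $\{(b,a_2,\dots,a_m): b\in I_1\}$, producing the $d_1$ coefficient polynomials of $P \in \KK[x_1]$ as tables of values on $I_2\times\cdots\times I_m$; then recursively interpolate each of those $d_1$ tables in the remaining $m-1$ variables. The same recurrence and hence the same $\softO(m\,d_1\cdots d_m)$ bound applies, using that univariate interpolation from $d$ values at pairwise distinct points costs $\softO(d)$. Correctness of the recursion is just the fact that a polynomial of bidegree $<(d_1,\dots,d_m)$ is uniquely determined by its restriction to a product grid of the matching cardinalities, applied one coordinate at a time; no cancellation or genericity issue arises since the $I_j$ have pairwise distinct entries.

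The only mild subtlety — and the point I would be most careful about — is bookkeeping the data layout so that the ``transpose'' between the $x_1$-slices and the $(x_2,\dots,x_m)$-slices is free (a reindexing, not an arithmetic operation), so that the $\softO(m\,d_1\cdots d_m)$ arithmetic bound genuinely holds; this is standard and is exactly the content of the cited reference~\cite{Pan94}. One should also note that $m$ here need not be fixed: the factor $m$ in the bound is explicit precisely because we pay $\softO(d_1\cdots d_m)$ at each of the $m$ levels of the recursion, in contrast with Fact~\ref{fact:multiprod} where Kronecker substitution produces a $2^m$ factor.
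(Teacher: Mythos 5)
Your proof is correct. The paper does not actually prove Fact~\ref{fact:multieval}---it states it with a citation to \cite{Pan94}---and your coordinate-by-coordinate reduction to univariate fast multipoint evaluation/interpolation (with the recurrence $C_m \le d_1\,C_{m-1} + \softO(d_1\cdots d_m)$ unrolling to $\softO(m\,d_1\cdots d_m)$) is precisely the standard tensor-product argument underlying that reference, including your remark that the inter-level reindexing must be treated as free and that the factor $m$ is explicit because one pays $\softO(d_1\cdots d_m)$ at each of the $m$ levels.
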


\noindent Again, the complexity in Fact~\ref{fact:multieval} is quasi-optimal for fixed~$m=O(1)$.

A general (although  non-optimal) technique to deal with more involved operations on multivariable algebraic objects (eg, in $\KK[x,y]$)
is to use (multivariate) evaluation and interpolation on polynomials and to perform operations on the evaluated algebraic objects using Facts~\ref{fact:complexity}--\ref{fact:multieval}. To put this strategy in practice, the size of the output needs to be well controlled. We illustrate this philosophy on the example of resultant computation, based on the following easy variation of~\cite[Thm.~6.22]{GaGe03}.
\begin{fact} \label{borne resultant}
  Let $P(x,y)$ and $Q(x,y)$ be bivariate polynomials of respective bidegrees $(d_x^P, d_y^P)$ and $(d_x^Q, d_y^Q)$.
Then, $$\deg\Resultant_y(P(x,y), Q(x,y)) \le d_x^Pd_y^Q+d_x^Qd_y^P.$$
\end{fact}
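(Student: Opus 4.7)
The plan is to go through the Sylvester matrix representation of the resultant and bound the $x$-degree of its determinant term by term. I view $P$ and $Q$ as polynomials in $y$ with coefficients in $\KK[x]$, writing
\[
P = \sum_{i=0}^{d_y^P} p_i(x)\,y^i, \qquad Q = \sum_{j=0}^{d_y^Q} q_j(x)\,y^j,
\]
with $\deg_x p_i \le d_x^P$ and $\deg_x q_j \le d_x^Q$. Then $\Resultant_y(P,Q)$ is, by definition, the determinant of the Sylvester matrix $\Syl_y(P,Q)$, a square matrix of size $N := d_y^P + d_y^Q$ with entries in $\KK[x]$.

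Next I would recall the structure of $\Syl_y(P,Q)$: it has $d_y^Q$ rows built from shifts of the coefficient vector $(p_{d_y^P},\dots,p_0)$, and $d_y^P$ rows built from shifts of $(q_{d_y^Q},\dots,q_0)$, the remaining entries being zero. Thus each of the first group of rows has entries of $x$-degree at most $d_x^P$, and each of the second group has entries of $x$-degree at most $d_x^Q$.

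Then I would expand the determinant as a signed sum over permutations $\sigma \in \mathfrak{S}_N$,
\[
\Resultant_y(P,Q) = \sum_{\sigma} \operatorname{sgn}(\sigma) \prod_{k=1}^{N} M_{k,\sigma(k)},
\]
and bound the degree of each term. Since $d_y^Q$ of the factors come from $P$-rows (contributing at most $d_x^P$ each) and $d_y^P$ of the factors come from $Q$-rows (contributing at most $d_x^Q$ each), each product has $x$-degree at most $d_y^Q\, d_x^P + d_y^P\, d_x^Q$. The bound on the degree of the whole sum follows immediately.

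There is no real obstacle here: the argument is a direct degree count on Sylvester's formula, and is precisely the bivariate adaptation of the univariate degree bound from \cite[Thm.~6.22]{GaGe03}. The only care needed is to track which rows carry coefficients of $P$ versus $Q$, in order to recover the asymmetric bound $d_x^P d_y^Q + d_x^Q d_y^P$ rather than the weaker symmetric bound $(d_x^P + d_x^Q)(d_y^P + d_y^Q)$ that a careless count would give.
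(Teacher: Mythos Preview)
Your argument is correct and is the standard Sylvester-matrix degree count. Note, however, that the paper does not actually give a proof of this statement: it is recorded as a \emph{Fact}, introduced only as ``an easy variation of~\cite[Thm.~6.22]{GaGe03}''. Your proposal simply spells out that easy variation, and even cites the same reference, so there is nothing to compare.
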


\begin{lem} \label{algo resultant}
	Let $P$ and $Q$ be polynomials in $\KK[x_1, \ldots, x_m, y]_{d_1, \ldots, d_m, d}$. 
	Then $R = \Resultant_y(P, Q)$ belongs to $\KK[x_1, \ldots, x_m]_{D_1, \ldots, D_m}$, where $D_i = 1+2(d-1)(d_i-1)$. Moreover, the coefficients of $R$ can be computed using $\softO(2^m d_1 \cdots d_m d^{m+1})$ ops. in $\KK$. \end{lem}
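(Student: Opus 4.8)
The plan is to bound the bidegree of $R$ via Fact~\ref{borne resultant}, then compute $R$ by multivariate evaluation and interpolation in the auxiliary variables $x_1,\dots,x_m$, reducing each instance to a univariate resultant handled by Fact~\ref{fact:complexity}(2). First I would treat all of $x_1,\dots,x_m$ as parameters and view $P,Q$ as elements of $\KK(x_1,\dots,x_m)[y]$; applying Fact~\ref{borne resultant} repeatedly (one variable at a time, or directly in its obvious $m$-variable form) gives $\deg_{x_i} R \le d_y^P \deg_{x_i} Q + d_y^Q \deg_{x_i} P \le 2(d-1)(d_i-1)$, so $R \in \KK[x_1,\dots,x_m]_{D_1,\dots,D_m}$ with $D_i = 1 + 2(d-1)(d_i-1)$. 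This is the easy part and just needs the bound stated in the excerpt.

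Next I would set up the evaluation--interpolation scheme. Pick, for each $i$, a set $I_i \subset \KK$ of $D_i$ pairwise distinct points (here we use $\operatorname{char}\KK = 0$ to guarantee enough points), and form the tensor product grid $I_1 \times \cdots \times I_m$, which has $\prod_i D_i = O(d^m d_1\cdots d_m)$ points. By Fact~\ref{fact:multieval}, evaluating the coefficients of $P$ and $Q$ (each a polynomial in $x_1,\dots,x_m$ of the given bidegrees) on this grid costs $\softO(m \cdot d \cdot d_1\cdots d_m \cdot d^m) = \softO(m d^{m+1} d_1\cdots d_m)$ ops., and similarly for interpolating $R$ back at the end. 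I should be a little careful that evaluating $P$, which lives in $\KK[x_1,\dots,x_m,y]_{d_1,\dots,d_m,d}$, means evaluating its $d$ coefficients (polynomials in the $x_i$) at the grid; so there is an extra factor $d$, already accounted for. Then, at each of the $\prod_i D_i$ grid points, $P$ and $Q$ specialize to univariate polynomials in $\KK[y]$ of degree less than $d$, whose resultant can be computed in $\softO(d)$ ops. by Fact~\ref{fact:complexity}(2). Multiplying by the number of points gives $\softO(d^{m+1} d_1\cdots d_m)$ for this middle stage.

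Summing the three stages yields the claimed $\softO(2^m d_1\cdots d_m d^{m+1})$ bound (the $2^m$ absorbs the $m$ and any constant from the multivariate routines). The only genuine subtlety — the main thing to get right rather than hard per se — is the standard specialization caveat: $\Resultant_y$ commutes with evaluation $x_i \mapsto a_i$ only when the leading coefficient in $y$ does not drop, i.e. $\deg_y$ is preserved under the specialization. One handles this in the usual way: the degree can only drop on the vanishing locus of $\operatorname{lc}_y(P)\cdot\operatorname{lc}_y(Q)$, a proper subvariety, so choosing the $I_i$ generically (or enlarging the grids by an $O(d)$ factor and discarding bad points, which does not change the asymptotics) avoids it; alternatively one notes that the value obtained at a bad point still interpolates correctly to the subresultant-corrected resultant, or one simply invokes that a generic shift of coordinates fixes the leading coefficients. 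I would mention this point briefly and move on, since it is routine. Everything else is bookkeeping with the three cited facts.
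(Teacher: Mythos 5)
Your proof is correct and takes the same evaluation--interpolation route as the paper: degree bound from Fact~\ref{borne resultant}, then tensor-grid evaluation and interpolation (Fact~\ref{fact:multieval}) with univariate resultants at each grid point (Fact~\ref{fact:complexity}). You additionally, and correctly, flag the specialization caveat (the degree in $y$ must not drop at the chosen evaluation points, else $\Resultant_y$ does not commute with specialization), which the paper's proof passes over silently; your remedy of choosing the grids generically, or enlarging them slightly and discarding bad points, is the standard one and does not affect the asymptotic complexity.
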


\begin{proof}
	The degrees estimates follow from Fact~\ref{borne resultant}. 
To compute $R$, we use an evaluation-interpolation scheme:  $P$ and $Q$ are evaluated at $D=D_1 \cdots D_m$ points $(x_1, \ldots, x_m)$ forming an $m$ dimensional tensor product grid; $D$ univariate resultants in $\KK[y]_d$ are computed; $R$ is recovered by interpolation. By Fact~\ref{fact:multieval}, the evaluation and interpolation steps are performed in $\softO(m D)$ ops. The second one has cost $\softO(d D)$. Using the inequality $D \le 2^m d_1 \cdots d_m d^m $ concludes the proof.
\end{proof}.

We conclude this section by recalling a complexity result for the computation of a squarefree decomposition of a bivariate polynomial.

\begin{fact}\label{fact:sqfree}\cite{Lecerf08}
A squarefree decomposition of a polynomial in $\KK[x,y]_{d_x, d_y}$ can be computed using $\softO(d_x^2  d_y)$ ops.
\end{fact}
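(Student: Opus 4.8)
This statement is quoted from \cite{Lecerf08}, so the plan is to recall how such a bound is obtained rather than to reprove it from scratch. I would view $Q$ as an element of $\KK(x)[y]$ of $y$\nobreakdash-degree at most $d_y$, write its squarefree decomposition $Q = c\,Q_1Q_2^2\cdots Q_m^m$ over $\KK(x)$, and run Yun's algorithm: the $Q_i$ are extracted through a chain of steps, each assembled from one $\gcd$ of $Q$ with $\partial_y Q$ followed by a constant number of exact divisions, derivatives and subtractions in $\KK(x)[y]$. The first thing to observe is that every polynomial produced along the way divides $Q$ or a derivative of $Q$, hence lies in $\KK[x,y]$ with $y$\nobreakdash-degree $\le d_y$ and $x$\nobreakdash-degree $\le d_x$; in particular the output $Q_1\cdots Q_m$ has arithmetic size $O(d_xd_y)$, which fixes the target.

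Next I would replace each bivariate $\gcd$ in $\KK(x)[y]$ by an evaluation--interpolation scheme in~$x$: pick abscissas $a\in\KK$, specialize to get $Q(a,y),\partial_y Q(a,y)\in\KK[y]_{d_y}$, run the fast univariate $\gcd$ of Fact~\ref{fact:complexity}(2) at cost $\softO(d_y)$ per abscissa, and reconstruct by interpolation using Fact~\ref{fact:complexity}(3); the exact divisions and the bookkeeping on the co-factors are handled identically. Degree bounds telling us how many abscissas to sample come from Fact~\ref{borne resultant} applied to $\Resultant_y(Q,\partial_y Q)$ and to its subresultant coefficients. Since each $\gcd$ and each co-factor has $x$\nobreakdash-degree $O(d_x)$, one round of Yun's algorithm reduces to $O(d_x)$ specializations plus univariate work, and extracting $Q_i$ from $Q_i^i$ is a power-series $i$\nobreakdash-th root, again quasi-linear. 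A structurally similar route is to compute the squarefree decomposition of a single good fibre $Q(a_0,y)$ and Hensel-lift the factorization $Q=\prod_i Q_i^i$ in $\KK[[x-a_0]][y]$ up to $x$\nobreakdash-precision $d_x+1$.

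The main obstacle — and the source of the precise exponent — is that an abscissa $a$ is usable only if it preserves the squarefree structure: no $Q_i(a,y)$ may acquire a spurious repeated factor, no two factors $Q_i(a,y),Q_j(a,y)$ may collide, and no leading $y$\nobreakdash-coefficient of a $Q_i$ may vanish at $a$. The forbidden $a$ form the zero set of a fixed nonzero polynomial whose degree is $O(d_xd_y)$ by the same kind of estimate as in Fact~\ref{borne resultant}; deterministically one must therefore enumerate enough candidates and keep the good ones (selected by a minimal-$y$\nobreakdash-degree criterion), whereas with randomness a single generic shift $x\mapsto x+\beta$ suffices. Carrying out this bookkeeping carefully — bounding the number of bad abscissas, controlling the $O(\log d_x)$ doubling levels hidden in the fast univariate routines, and keeping the $x$\nobreakdash-degrees of the successive remainders from drifting above $d_x$ — is precisely the subresultant analysis of \cite{Lecerf08}, and it yields the stated $\softO(d_x^2 d_y)$ bound.
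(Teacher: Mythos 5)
Fact~\ref{fact:sqfree} is cited from \cite{Lecerf08} without proof in the paper; your proposal is therefore compared against that reference rather than against anything in the text. Your sketch captures two reasonable routes (Yun's algorithm with evaluation--interpolation in~$x$, and squarefree decomposition of a good fibre followed by Hensel lifting), and the remarks about bad abscissas and subresultants are germane. But as written, the argument does not actually reach the announced bound, and the primary route is not the one taken in \cite{Lecerf08}.

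Two concrete concerns. First, the complexity accounting in your main route is off in shape. Yun's algorithm over $\KK(x)[y]$ performs up to $m$ rounds with $m$ as large as $d_y$; if each round is realised by evaluation--interpolation at $O(d_x)$ abscissas with quasi-linear univariate work $\softO(d_y)$ per abscissa, the natural total is $\softO(m\,d_x d_y)=\softO(d_x d_y^2)$, not the stated $\softO(d_x^2 d_y)$. Your deterministic search over the $O(d_x d_y)$ forbidden abscissas at cost $\softO(d_y)$ each likewise yields $\softO(d_x d_y^2)$. Saying that ``carrying out this bookkeeping carefully \ldots\ yields the stated bound'' is precisely the step that would need a real argument; it is not a consequence of what precedes. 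Second, your casual claim that every intermediate polynomial lies in $\KK[x,y]$ with $x$-degree $\le d_x$ hides the normalisation of gcds over $\KK(x)[y]$: a gcd there is defined only up to a unit in $\KK(x)$, and extracting the primitive representative (so that the degree bound holds and the interpolation has the right number of points) is exactly where the subresultant degree bounds and content management enter. You gesture at this but do not pin it down. Finally, \cite{Lecerf08} organises the computation around separable factorization and a single lifting pass (closer to your secondary route) rather than round-by-round Yun over $\KK(x)[y]$; presenting Yun as the mainline reconstruction of that reference is misleading, even if a Yun-based variant could in principle be pushed through.
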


\section{Special Resultants}\label{sec:resultants}
\subsection{Polynomials for Residues}
We are interested in a polynomial that vanishes at the residues of a given rational function. It is a classical result in symbolic integration that in the case of simple poles, there is a resultant formula for such a polynomial, first introduced by Rothstein~\cite{Rothstein1976} and Trager~\cite{Trager1976}. This was later generalized by Bronstein~\cite{Bronstein92} to accommodate multiple poles as well. However, as mentioned by Bronstein, the complexity of his method grows exponentially with the multiplicity of the poles. Instead, we develop in this section an algorithm with polynomial complexity.

Let $f = P/Q$ be a nonzero element in $\KK(y)$, where $P, Q$ are two coprime
polynomials in $\KK[y]$. 
Let $Q_1Q_2^2\cdots Q_m^m$ be a squarefree decomposition of~$Q$.
For $i\in\{1,\dots,m\}$, if $\alpha$ is a root of~$Q_i$ in an algebraic extension of~$\KK$, then it is simple and the residue of~$f$ at~$\alpha$ is the coefficient of~$t^{-1}$ in the Laurent expansion of $f(\alpha+t)$ at~$t=0$. If~$V_i(y,t)$ is the polynomial $(Q_i(y+t)-Q_i(y))/t$,  this residue is the coefficient of~$t^{i-1}$ in the Taylor expansion at~$t=0$ of the regular rational function $f(y+t)Q_i^i(y+t)/V_i^i(y,t)$, 
computed with rational operations only and then evaluated at~$y=\alpha$. If this coefficient is denoted~$S_{i-1}(y)=A_i(y)/B_i(y)$, with polynomials $A_i$ and $B_i$,  the residue at~$\alpha$ is a root of $\Resultant_y(A_i-zB_i,Q_i)$. When $m=1$, this is exactly the Rothstein-Trager resultant.
This computation leads to Algorithm~\ref{algo:Bronstein}, which avoids the exponential blowup of the complexity that would follow from a symbolic pre-computation of the Bronstein resultants.

\begin{algo}	
	Algorithm \textbf{AlgebraicResidues}$(P/Q)$
	
	\begin{algoenv}{Two polynomials $P$ and $Q\in \KK[y]$}{A polynomial in $\KK[z]$ canceling all the residues of $P/Q$}
		\State Compute $Q_1Q_2^2\dotsm Q_m^m$ a squarefree decomposition of~$Q$;
		\For{$i\gets1$ to $m$}
		\If{$\deg_yQ_i=0$} \ $R_i\gets1$
		\Else
		\State $U_i(y)\gets Q(y)/Q_i^i(y)$;
		\State $V_i(y,t)\gets (Q_i(y+t)-Q_i(y))/t$;
		\State Expand $\frac{P(y+t)}{U_i(y+t)V_i^i(y,t)}=S_0+\dotsb+S_{i-1}t^{i-1}+O(t^{i})$;
    \State Write $S_{i-1}$ as $A_i(y)/B_i(y)$ with $A_i$ and $B_i$ coprime;
		\State $R_i(z)\gets \Resultant_y(A_i-zB_i,Q_i)$;
		\EndIf
		\EndFor
		\State\Return 			
		$R_1R_2\dotsm R_m$ 
	\end{algoenv}
	\caption{Polynomial canceling the residues}\label{algo:Bronstein}
\end{algo}

\begin{example}\label{ex:Bronstein}
Let $d\ge 0$ be an integer, and let $G_d(x,y)\in\QQ(x)[y]$ be the rational function $y^d/(y-y^2-x)^{d+1}$. 
The poles have order~$d+1$. In this example, the algorithm can be performed by hand for arbitrary~$d$:
a squarefree decomposition has~$m=d+1$ and~$Q_{m}=y-y^2-x$, the other $Q_i$'s being~1. Then~$V_m=1-2y-t$ and the next step is to expand
\[\frac{(y+t)^d}{(1-2y-t)^{d+1}}=\frac{(y+t)^d}{(1-2y)^{d+1}\left(1-\frac{t}{1-2y}\right)^{d+1}}.\]
Expanding the binomial series gives the coefficient of~$t^d$ as~$\frac{A_m}{B_m}$, with
\[A_m=\sum_{i=0}^d{\binom{d}{i}\binom{d+i}{i}y^i(1-2y)^{d-i}},\quad B_m=(1-2y)^{2d+1}.\]
The residues are then cancelled by~$\Resultant_y(A_m-zB_m,Q_{m})$, namely 
\begin{equation}\label{eq:pol-diag-ex}
(1-4t)^{2d+1} z^2 - \left( \sum_{k=0}^{\lfloor d/2 \rfloor} \binom{d}{2k}\binom{2k}{k} t^k \right)^2.
\end{equation}
\end{example}

\smallskip\noindent\textbf{Bounds.}
In our applications, as in the previous example, the polynomials $P$ and~$Q$ have coefficients that are themselves polynomials in another variable~$x$. Let then~$(d_P,e_P)$, $(d_Q,e_Q)$, $(d^\star,e^\star)$ and $(d_i,e_i)$ be the bidegrees in $(x,y)$ of $P$, $Q$, $Q^\star$ and $Q_i$, where~$Q^\star=Q_1\dotsm Q_m$ is a squarefree part of~$Q$. In Algorithm~\ref{algo:Bronstein},  $V_i$ has degree at most~$d_i$ in $x$ and total degree~$e_i-1$ in $(y,t)$. Similarly, $P(y+t)$ has degree~$d_P$ in~$x$ and total degree~$e_P$ in $(y,t)$. When $e^\star>1$, by Proposition~\ref{prop:derivatives-rational-function}, the coefficient~$S_j$ in the power series expansion of $P(y+t)/U_i(y+t)/V_i(y,t)^i$ has  denominator of bidegree bounded by~$(d_Q+jd^\star,e_Q-i+j(e^\star-1))$ and numerator of bidegree bounded by~$(d_P+jd^\star,e_P-j+j(e^\star-1))$. Thus by Fact~\ref{borne resultant}, $\deg_x R_i$ is at most
\begin{multline*}
((i-1)d^\star+\max(d_P,d_Q))e_i+\\
d_i((i-1)(e^\star-1)-i+\max(e_P+1,e_Q)),
\end{multline*}
while its degree in~$z$ is bounded by the number of residues~$e_i$.
Summing over all~$i$ leads to the bound
\begin{multline*}
(e_Q-e^\star)d^\star+(d_Q-d^\star)(e^\star-1)\\
+e^\star\max(d_P,d_Q)-d_Q+d^\star\max(e_P+1,e_Q).
\end{multline*}
If~$e^\star=1$, a direct computation  gives the bound $\max(d_P,d_Q)+d^\star e_P$.
\begin{thm}\label{th:Bronstein}
	Let $P(x,y)/Q(x,y)\in\KK(x,y)_{d_x+1,d_y+1}$. Let $Q^\star$ be a squarefree part of $Q$ wrt y. Let~$(d_x^\star,d_y^\star)$ be  bounds on the bidegree of $Q^\star$. Then the polynomial computed by Algorithm~\ref{algo:Bronstein} annihilates the residues of~$P/Q$, has degree in~$z$ bounded by~$d_y^\star$ and degree in~$x$ bounded by 
\[2d_x^\star(d_y+1)+(2d_y^\star-1)d_x-2d_x^\star d_y^\star.\]
It can be computed in $\bigO(m^2d_x^\star d_y^\star(m^2 + {d_y^\star}^2))$ operations in~$\KK$.
\end{thm}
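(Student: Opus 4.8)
The plan is to follow Algorithm~\ref{algo:Bronstein} step by step and establish, in turn, that its output annihilates the residues, that its degrees in $z$ and in $x$ obey the stated bounds, and that its running time is as claimed. Correctness is exactly the discussion preceding the algorithm: a root $\alpha$ of $Q_i$ in an algebraic closure of $\KK$ is simple (since $Q_i$ is squarefree) and is not a root of $Q_j$ for $j\neq i$ (the $Q_j$ are pairwise coprime), so the coefficients, as power series in $t$, of $P(y+t)/(U_i(y+t)V_i(y,t)^i)$ are regular at $y=\alpha$, and its coefficient $S_{i-1}=A_i/B_i$ of $t^{i-1}$, evaluated at $y=\alpha$, is the residue of $P/Q$ at $\alpha$; hence that residue is a common root in $y$ of $A_i-zB_i$ and $Q_i$ when $z$ takes it as value, so it cancels $R_i=\Resultant_y(A_i-zB_i,Q_i)$, and since every residue of $P/Q$ sits at a root of some $Q_i$, the product $R_1\cdots R_m$ cancels all of them. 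The degree in $z$ follows because $A_i-zB_i$ is affine in $z$: Fact~\ref{borne resultant} gives $\deg_z R_i\le\deg_y Q_i$, and summing gives $\deg_z(R_1\cdots R_m)\le\deg_y(Q_1\cdots Q_m)\le d_y^\star$.

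The degree in $x$ refines the bounds obtained just before the statement. There, Proposition~\ref{prop:derivatives-rational-function} bounds the bidegrees of the numerator and denominator of each $S_{i-1}$, hence of $A_i$ and $B_i$; Fact~\ref{borne resultant} then bounds $\deg_x R_i$; and summing over $i$, with $\sum_i i\deg_y Q_i=\deg_y Q$, $\sum_i i\deg_x Q_i=\deg_x Q$, $\sum_i\deg_y Q_i=\deg_y Q^\star$, $\sum_i\deg_x Q_i=\deg_x Q^\star$, gives the displayed bound in the exact bidegrees of $P$, $Q$, $Q^\star$ (with the cheaper bound $\max(d_P,d_Q)+d^\star e_P$ in the degenerate case $e^\star=1$). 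It remains to substitute $d_P,d_Q\le d_x$, $e_P,e_Q\le d_y$, $d^\star\le d_x^\star$, $e^\star\le d_y^\star$. Regrouping the displayed expression, its coefficient of $e^\star$ is $\max(d_P,d_Q)+d_Q-2d^\star\ge 0$ (because $d^\star\le d_Q$), so the expression is nondecreasing in $e^\star$ and one may first set $e^\star=d_y^\star$; the residual dependence on $d^\star$ is then handled by a short case split on the sign of $d_x-d_x^\star$ and on whether $e^\star=1$ or $e^\star\ge 2$, each case bounding the expression by $2d_x^\star(d_y+1)+(2d_y^\star-1)d_x-2d_x^\star d_y^\star$, with slack $d_x$. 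This is elementary but error-prone bookkeeping.

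For the complexity I would use $\deg_x Q\le m\,d_x^\star$ and $\deg_y Q\le m\,d_y^\star$ (since $\deg Q=\sum_i i\deg Q_i\le m\deg Q^\star$), and assume after a preliminary reduction that $\deg_y P<\deg_y Q$. The squarefree decomposition costs $\softO((\deg_x Q)^2\deg_y Q)$ by Fact~\ref{fact:sqfree}. For each index $i$ with $\deg_y Q_i>0$ (the others give $R_i=1$): forming $U_i=Q/Q_i^i$ and $V_i$ is cheap; the expansion of $P(y+t)/(U_i(y+t)V_i(y,t)^i)$ to order $t^i$ is computed by evaluating in $x$ at $\bigO(m\,d_x^\star)$ points, doing power-series arithmetic in $t$ over $\KK(y)$ truncated in $y$ to degree $\bigO(m\,d_y^\star)$ (the $S_j$ being sized by Proposition~\ref{prop:derivatives-rational-function}), and interpolating, at a cost of $\softO(m^3 d_x^\star d_y^\star)$; reducing $S_{i-1}$ to the coprime form $A_i/B_i$ is an extended gcd (Fact~\ref{fact:complexity}); and $R_i=\Resultant_y(A_i-zB_i,Q_i)$ is computed as in the proof of Lemma~\ref{algo resultant}, by evaluation--interpolation on a grid of $\bigO(\deg_x R_i)\times\bigO(\deg_y Q_i)$ points in $(x,z)$, each costing one univariate resultant in $\KK[y]$ of degree $\bigO(m\,d_y^\star)$. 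Since $\deg_y Q_i\le d_y^\star$ and $\sum_i\deg_x R_i=\bigO(m\,d_x^\star d_y^\star)$ by the previous part, the resultant steps total $\softO(m^2 d_x^\star{d_y^\star}^3)$, the expansion steps total $\softO(m^4 d_x^\star d_y^\star)$, and the final product $R_1\cdots R_m$ in $\KK[x,z]$ is cheaper; altogether $\bigO(m^2 d_x^\star d_y^\star(m^2+{d_y^\star}^2))$.

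I expect the complexity to be the real obstacle, not the degree bounds. Naive arithmetic on the trivariate object $P(y+t)/(U_i(y+t)V_i(y,t)^i)$ would be prohibitively expensive, so every substep must be routed through univariate operations and tensor-product evaluation--interpolation; the delicate points are sizing all the evaluation grids from Proposition~\ref{prop:derivatives-rational-function}, checking that the intermediate rational functions in $\KK(x)$ or $\KK(y)$ never exceed those degrees, and verifying that the per-$i$ costs sum to exactly the claimed powers of $m$ and $d_y^\star$ rather than something larger. The degree bound in $x$, by contrast, is just a careful optimisation over the admissible exact bidegrees.
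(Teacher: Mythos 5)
Your proof takes essentially the same route as the paper: correctness from the discussion preceding the algorithm, the $z$-degree by summing $\deg_z R_i\le\deg_y Q_i$, the $x$-degree by substituting the a priori bounds into the expression derived from Proposition~\ref{prop:derivatives-rational-function} and Fact~\ref{borne resultant}, and the complexity by evaluation--interpolation per iteration with the same per-$i$ costs $\softO(i D_i E_i)$ for the expansion and $\softO((d_i E_i+e_i D_i)e_i E_i)$ for the resultant, then summing with $\sum e_i^2\le(d_y^\star)^2$, $\sum d_i e_i\le d_x^\star d_y^\star$ and $D_m,E_m=O(m d_x^\star), O(m d_y^\star)$. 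The only difference is cosmetic (you phrase the expansion cost in terms of $m^4 d_x^\star d_y^\star$ and the resultant cost as $m^2 d_x^\star(d_y^\star)^3$, which are exactly the two terms of the paper's $m^2 d_x^\star d_y^\star(m^2+{d_y^\star}^2)$), so this is a match, not an alternative proof.
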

Note that both bounds above (when $e^\star>1$ and $e^\star=1$) are upper bounded by~$2d_xd_y$, independently of the multiplicities.
The complexity is also bounded independently of the multiplicities by
$\bigO(d_x^\star d_y^\star d_y^4)$.

\begin{proof}
The bounds on the bidegree of $R = R_1 R_2 \cdots R_m$ are easily derived from the previous discussion.

By Fact~\ref{fact:sqfree}, a squarefree decomposition of $Q$ can be computed using $\softO(d_x^2  d_y)$ ops.
We now focus on the computations performed inside the $i$th iteration of the loop.
Computing $U_i$ requires an exact division of polynomials of bidegrees at most $(d_x, d_y)$; this division can be performed by evaluation-interpolation in $\softO(d_x d_y)$ ops. Similarly, the trivariate polynomial $V_i$ can be computed by evaluation-interpolation wrt $(x,y)$ in time $\softO(d_i e_i^2)$. By the discussion preceding Theorem~\ref{th:Bronstein}, both $A_i(x,y)$ and $B_i(x,y)$ have bidegrees at most $(D_i, E_i)$, where $D_i = d_x+id_x^\star$ and $E_i =  d_y+id_y^\star$. They can be computed by evaluation-interpolation in 
$\softO(i D_i E_i)$ ops. Finally, the resultant $R_i(x,z)$ has bidegree at most $(d_i E_i + e_i D_i, e_i)$, and since the degree in~$y$ of $A_i-zB_i$ and~$Q_i$ is at most~$E_i$, it can be computed by evaluation-interpolation in $\softO((d_i E_i + e_i D_i) e_i E_i)$ ops by Lemma~\ref{algo resultant}. The total cost of the loop is thus $\softO(L)$, where
\[ L = \sum_{i=1}^m \left( (i+e_i^2) D_i E_i + d_i e_i E_i^2 \right).\]
Using the (crude) 
bounds $D_i \le D_m$, $E_i \le E_m$, $\sum_{i=1}^m e_i^2 \le {d_y^\star}^2$ and $\sum_{i=1}^m d_i e_i \le d_x^\star d_y^\star$ shows that $L$ is bounded by
\[ 
D_m E_m \sum_{i=1}^m (i+e_i^2) + E_m^2 \sum_{i=1}^m  d_i e_i \le 
D_m E_m (m^2+{d_y^\star}^2) + E_m^2   d_x^\star d_y^\star,
\]
which, by using the inequalities $D_m \le 2m d_x^\star$ and $E_m \le 2m d_y^\star$, is seen to belong to $O(m^2d_x^\star d_y^\star(m^2 + {d_y^\star}^2))$.

Gathering together the various complexity bounds yields the stated bound and finishes the proof of the theorem.
\end{proof}

\noindent {\bf Remark.} Note that one could also use Hermite reduction combined with the usual Rothstein-Trager resultant in order to compute a polynomial $\tilde{R}(x,z)$ that annihilates the residues. Indeed, Hermite reduction computes an auxiliary rational function that admits the same residues as the input, while only having simple poles. A close inspection of this approach provides the same bound $d_y^\star$ for the degree in $y$ of $\tilde{R}(x,z)$, but a less tight bound for its degree in $x$, namely worse by a factor of $d_y^\star$. The complexity of this alternative approach appears to be $\softO(d_x d_y (d_y + {d_y^\star}^3))$ (using results from~\cite{BoChChLi10}), to be compared with the complexity bound from Theorem~\ref{th:Bronstein}.

\subsection{Sums of roots of a polynomial} \label{subsec:summation of residues}

Given a polynomial~$P\in\KK[y]$ of degree~$d$ with coefficients in a field~$\KK$ of characteristic~0, let 
$\alpha_1,\dots,\alpha_d$ be its roots in the algebraic closure of~$\KK$. For any positive integer $c\le d $, the polynomial of degree $\binom{d}{c}$ defined by
\begin{equation}\label{eq:defSigma}
\Sigma_c P = \prod_{i_1 < \cdots < i_c}{\left(y-(\alpha_{i_1}+\alpha_{i_2}+\cdots+\alpha_{i_c})\right)}
\end{equation}
has coefficients in~$\KK$. This section discusses the computation of~$\Sigma_cP$ summarized in Algorithm~\ref{algo:Sigma_c}, which can be seen as an additive analogue of the \emph{Platypus algorithm} of Banderier and Flajolet~\cite{BanderierFlajolet2002}.

\begin{algo}
	Algorithm \textbf{PureComposedSum}$(P, c)$
	
	\begin{algoenv}{A polynomial $P$ of degree $d$ in $\KK[y]$, a positive integer $c\le d$}{The polynomial $\Sigma_c P$ from Eq.~\eqref{eq:defSigma}}
		\State $D \gets\binom{d}{c}$
		\State $\Newton(P) \gets \rec(P')/\rec(P)\bmod y^{D+1}$
		\State $S \gets \Newton(P)\odot\exp(y) \bmod y^{D+1}$
		\State $F\gets\exp\left(\sum_{n=1}^c(-1)^{n-1}\frac{S(n y)}{n}z^n\right)\bmod (y^{D+1},z^{c+1})$
		\State $\Newton(\Sigma_c P)\gets ([z^c]F)\odot\sum{n!y^n}\bmod y^{D+1}$
		\State\Return 			
$\rec\left(\exp\left(\int\frac{D-\Newton(\Sigma_c P)}{y}\,\mathrm{d}y\right)\bmod y^{D+1}\right)$
	\end{algoenv}
	\caption{Polynomial canceling the sums of $c$ roots}\label{algo:Sigma_c}
\end{algo}

We recall two classical formulas (see, eg, \cite[\S2]{BoFlSaSc06}), the second one being valid for monic $P$ only::
\begin{equation}\label{eq:NewtonSums}
\Newton(P) = \frac{\rec(P')}{\rec(P)},\qquad
\rec(P) = \exp \left(\int{\frac{d-\Newton(P)}{y}\,\mathrm{d}y}\right).
\end{equation}
Truncating these formulas at order~$d+1$ makes~$\Newton(P)$ a representation of the polynomial~$P$ (up to normalization), since both conversions above can be performed quasi-optimally by Newton iteration~\cite{Schonhage1982, Pan2000a, BoFlSaSc06}.
The key for Algorithm~\ref{algo:Sigma_c} is the following variant of~\cite[\S2.3]{BanderierFlajolet2002}.
\begin{prop}
	Let~$P\in\KK[y]$ be a polynomial of degree~$d$, 
let $\Newton(P)$ denote the generating series of its Newton sums and let $S$ be the series $\Newton(P) \odot \exp(y)$.
Let $\Psi_c$ be the polynomial in $\KK[t_1, \ldots, t_c]$ defined by
	\[\Psi_c(t_1,..., t_c) = [z^c]\exp\left(\sum_{n \ge 1}{(-1)^{n-1}t_n\frac{z^n}{n}} \right).\]
	Then the following equality holds
	\[\Newton(\Sigma_c P)\odot \exp(y) = \Psi_c(S(y), S(2 y), \ldots, S(c y)).\]
\end{prop}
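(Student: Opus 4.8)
The plan is to read the Hadamard product with $\exp(y)$ as the passage to exponential generating functions of Newton sums, and then to recognize the right-hand side as the classical expression of elementary symmetric functions in terms of power sums, specialized at the exponentials of the roots of $P$.

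First I would record the elementary observation that, for any polynomial $T$ with list of roots (counted with multiplicity) $\gamma_1,\dots,\gamma_N$, one has $\Newton(T)=\sum_{k\ge0}(\gamma_1^k+\dots+\gamma_N^k)y^k$, and hence
\[
\Newton(T)\odot\exp(y)=\sum_{k\ge0}\frac{\gamma_1^k+\dots+\gamma_N^k}{k!}\,y^k=e^{\gamma_1y}+\dots+e^{\gamma_Ny}.
\]
Applied to $P$, with roots $\alpha_1,\dots,\alpha_d$, this gives $S(y)=\sum_{i=1}^d e^{\alpha_iy}$, so that after the substitution $y\mapsto ny$ we get $S(ny)=\sum_{i=1}^d u_i^{\,n}$ with $u_i:=e^{\alpha_iy}\in\overline{\KK}[[y]]$; applied to $\Sigma_cP$, whose $\binom{d}{c}$ roots are precisely the sums $\alpha_{i_1}+\dots+\alpha_{i_c}$ over $i_1<\dots<i_c$ with the multiplicities dictated by \eqref{eq:defSigma}, it gives
\[
\Newton(\Sigma_cP)\odot\exp(y)=\sum_{i_1<\dots<i_c}e^{(\alpha_{i_1}+\dots+\alpha_{i_c})y}=\sum_{i_1<\dots<i_c}u_{i_1}\cdots u_{i_c}=e_c(u_1,\dots,u_d),
\]
the $c$-th elementary symmetric polynomial in the $u_i$, while $S(ny)=p_n(u_1,\dots,u_d)$ is the $n$-th power sum. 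Thus the statement reduces to the purely algebraic identity $e_c(u)=\Psi_c\big(p_1(u),\dots,p_c(u)\big)$.

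That identity is the generating-function form of Newton's relations: working in $\overline{\KK}[u_1,\dots,u_d][[z]]$,
\[
\sum_{c\ge0}e_c(u)\,z^c=\prod_{i=1}^d(1+u_iz)=\exp\Big(\sum_{i=1}^d\log(1+u_iz)\Big)=\exp\Big(\sum_{n\ge1}(-1)^{n-1}\frac{p_n(u)}{n}z^n\Big),
\]
using $\log(1+u_iz)=\sum_{n\ge1}(-1)^{n-1}(u_iz)^n/n$ and summing over $i$. Extracting the coefficient of $z^c$ — which involves only $p_1,\dots,p_c$, confirming that $\Psi_c$ is a genuine polynomial in $t_1,\dots,t_c$ — yields exactly $e_c(u)=\Psi_c(p_1(u),\dots,p_c(u))$. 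Specializing $u_i=e^{\alpha_iy}$, equivalently $p_n\mapsto S(ny)$ and $e_c\mapsto\Newton(\Sigma_cP)\odot\exp(y)$, proves the claimed equality in $\overline{\KK}[[y]]$; since both sides already lie in $\KK[[y]]$ it holds there as well.

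I do not expect any real obstacle: the only points requiring care are that $\Newton(\Sigma_cP)$ carries the roots of $\Sigma_cP$ with their correct multiplicities (automatic from the product definition, and insensitive to coincidences among the sums of roots) and that the formal power-series manipulations above are legitimate over $\overline{\KK}$, which they are since $u_i\in\overline{\KK}[[y]]$ and the $z$-adic identities converge formally. The one mildly non-obvious step is identifying the definition of $\Psi_c$ with the ``power sums $\to$ elementary symmetric functions'' transition, which is exactly what makes this additive Platypus-style computation work.
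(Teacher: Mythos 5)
Your proof is correct and is essentially the paper's own argument: both compute $\Newton(\Sigma_cP)\odot\exp(y)$ as $[z^c]\prod_i(1+z e^{\alpha_i y})$ and then pass to the exponential of a logarithm to express this in terms of $S(ny)$. The only difference is cosmetic — you make the symmetric-function identity $e_c = \Psi_c(p_1,\dots,p_c)$ explicit by naming $u_i = e^{\alpha_i y}$, whereas the paper carries out the same manipulation inline.
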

\begin{proof}
	By construction, the series $S$ is 
	\[S(y)=\sum_{n\ge 0}{(\alpha_1^n+\alpha_2^n+\cdots+\alpha_d^n)\frac{y^n}{n!}} = \sum_{i=1}^{d}\exp(\alpha_i y).\]
	When applied to the polynomial~$\Sigma_c P$, this becomes
	\begin{align*}
	\Newton(\Sigma_c P)\odot \exp(y) &= \sum_{i_1 < \cdots < i_c}{\exp \left({(\alpha_{i_1} + \alpha_{i_2}+\cdots+\alpha_{i_c})y}\right)}\\
	&= [z^c]\prod_{i=1}^{d}{\left(1+z \exp(\alpha_i y)\right)}.
	\end{align*}
This expression rewrites:
	\begin{multline*}
[z^c]\exp\left(\sum_{i=1}^{d}\log(1+z \exp({\alpha_i y}))\right)\\
	= [z^c]\exp\left(\sum_{i=1}^{d}\sum_{m\ge 1}{(-1)^{m-1}\exp({\alpha_i m y})\frac{z^m}{m}}\right)\\
	= [z^c]\exp\left(\sum_{m\ge 1}{(-1)^{m-1}S(m y)\frac{z^m}{m}}\right),
	\end{multline*}
and the last expression equals $\Psi_c(S(y), S(2 y), \dots, S(c y))$.
\end{proof}

The correctness of Algorithm~\ref{algo:Sigma_c} follows from observing that the truncation orders $D+1$ in $y$ and $c+1$ in $z$ of the power series involved in the algorithm are sufficient to enable the reconstruction of~$\Sigma_cP$ from its first Newton sums by~\eqref{eq:NewtonSums}.

\smallskip \noindent \textbf{Bivariate case.}
We now consider the case where~$P$ is a polynomial in~$\KK[x,y]$. Then, the coefficients of $\Sigma_cP$ wrt $y$ may have denominators. We follow the steps of Algorithm~\ref{algo:Sigma_c} (run on $P$ viewed as a polynomial in~$y$ with coefficients in~$\KK(x)$) in order to compute bounds on the bidegree of the polynomial obtained by clearing out these denominators.
We obtain the following result.

\begin{thm} \label{thm:platypus-bound}
	Let $P \in \KK[x,y]_{d_x+1,d_y+1}$, let $c$ be a positive integer such that $c\le d_y$ and let $D=\binom{d_y}{c}$. Let $a\in\KK[x]$ denote the leading coefficient of $P$ wrt $y$ and let $\Sigma_c P$ be defined as in~Eq.~\eqref{eq:defSigma}. Then $a^{D} \cdot \Sigma_c P$ is a polynomial in $\KK[x,y]$ of bidegree at most $\left(d_xD, D\right)$
that cancels all sums $\alpha_{i_1}+\cdots+\alpha_{i_c}$ of $c$ roots $\alpha_i(x)$ of $P$, with $i_1 < \cdots < i_c$.
Moreover, this polynomial can be computed in $\softO(c d_x D^2)$ ops.
\end{thm}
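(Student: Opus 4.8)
The plan is to prove the three assertions of the statement in turn: first that $a^{D}\Sigma_cP$ vanishes at the sums of $c$ roots and has degree $D$ in $y$, then that it actually lies in $\KK[x,y]$ and has $x$-degree at most $d_xD$, and finally the complexity bound. We may assume $\deg_yP=d_y$, so that $a=[y^{d_y}]P$ is nonzero (otherwise $a=0$ and the statement is vacuous). Write $d=d_y$ and $p_i=[y^i]P\in\KK[x]$, so that $\deg_xp_i\le d_x$ and $p_d=a$, and let $\alpha_1,\dots,\alpha_d$ be the roots of $P$ in an algebraic closure of $\KK(x)$. The vanishing claim is then immediate from the definition $\Sigma_cP=\prod_{i_1<\dots<i_c}\bigl(y-(\alpha_{i_1}+\dots+\alpha_{i_c})\bigr)$, and multiplying by the nonzero scalar $a^{D}$ changes nothing. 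Likewise $\Sigma_cP$ is monic of degree $\binom dc=D$ in $y$, so $a^{D}\Sigma_cP$ has degree $D$ in $y$ with leading coefficient $a^{D}$, whose $x$-degree is at most $Dd_x$.

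For the degree in $x$ I would argue with symmetric functions rather than trace the algorithm. For a $c$-subset $I\subseteq\{1,\dots,d\}$ put $\sigma_I=\sum_{i\in I}\alpha_i$; then $[y^{D-j}]\Sigma_cP$ equals, up to sign, the $j$-th elementary symmetric function of the $D$ quantities $\sigma_I$. Since each $\sigma_I$ is linear in the $\alpha_i$, this coefficient is a polynomial in $\alpha_1,\dots,\alpha_d$ that is symmetric (a permutation of $\{1,\dots,d\}$ permutes the $c$-subsets) and homogeneous of degree $j$. By the fundamental theorem of symmetric polynomials it is a $\KK$-linear combination of monomials $e_1^{m_1}\cdots e_d^{m_d}$ in the elementary symmetric functions $e_\ell=e_\ell(\alpha_1,\dots,\alpha_d)$, and by homogeneity only monomials with $m_1+2m_2+\dots+dm_d=j$ occur. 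Now $e_\ell=(-1)^{\ell}p_{d-\ell}/a$, so each such monomial equals $\pm a^{-\mu}p_{d-1}^{m_1}\cdots p_0^{m_d}$ with $\mu=m_1+\dots+m_d\le j$; its denominator divides $a^{j}$, and after multiplying by $a^{j}$ it becomes $\pm a^{\,j-\mu}p_{d-1}^{m_1}\cdots p_0^{m_d}$, of $x$-degree at most $(j-\mu)d_x+\mu d_x=jd_x$. Hence $a^{j}[y^{D-j}]\Sigma_cP\in\KK[x]$ has $x$-degree at most $jd_x$, and multiplying by the remaining $a^{D-j}$ shows $a^{D}[y^{D-j}]\Sigma_cP\in\KK[x]$ has degree at most $Dd_x$. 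This proves simultaneously that $a^{D}\Sigma_cP\in\KK[x,y]$ and that its bidegree is at most $(d_xD,D)$.

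For the complexity I would run Algorithm~\ref{algo:Sigma_c} by evaluation and interpolation with respect to $x$. Over a field, Algorithm~\ref{algo:Sigma_c} on a degree-$d$ input costs $\softO(cD)$ operations: the univariate power series steps (those producing $\Newton(P)$, $S$, $\Newton(\Sigma_cP)$, and the final $\rec$--$\exp$--$\int$ step) are at precision $D+1$ and cost $\softO(D)$ by Fact~\ref{fact:complexity}; forming $S(ny)$ for $n\le c$ costs $\softO(cD)$; and the bivariate exponential defining $F$, truncated at precision $(D+1,c+1)$, is computed in $\softO(cD)$ by the Newton iteration for $\exp$ over the ring $\KK[y]/(y^{D+1})$, each power-series-in-$z$ multiplication being a bivariate product of bidegree at most $(c,D)$ and hence costing $\softO(cD)$ by Fact~\ref{fact:multiprod}. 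Now pick $N=d_xD+1$ distinct points $\xi\in\KK$ (possible since $\KK$ has characteristic $0$ and is infinite) at which $a$ does not vanish; there are at most $d_x$ forbidden values. Evaluate $p_0,\dots,p_d$ at the $N$ points in $\softO(d_yN)$ ops (Fact~\ref{fact:complexity}(3), applied to batches of $O(d_x)$ points), run Algorithm~\ref{algo:Sigma_c} on each $P(\xi,y)$ in $\softO(cD)$ ops, and scale the outputs by $a(\xi)^{D}$. Because the coefficients of $a^{D}\Sigma_cP$ are fixed polynomial expressions in $p_0,\dots,p_d$ (the denominators having been cleared in the previous paragraph), specialization at $\xi$ with $a(\xi)\ne0$ commutes with the formation of $\Sigma_c$, so the scaled outputs are exactly the values of $a^{D}\Sigma_cP$ at $x=\xi$; interpolating its $D+1$ coefficients in $y$, each of $x$-degree at most $Dd_x=N-1$, costs $\softO(DN)$ ops by Fact~\ref{fact:complexity}(3). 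Using $d_y\le cD$ (valid for all $1\le c\le d_y$) and $D\ge1$, the dominant term $\softO(cD\cdot N)$ is $\softO(cd_xD^2)$, as claimed.

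The main obstacle is the degree-in-$x$ bound of the second paragraph: one has to see both that the single power $a^{D}$ suffices to clear every denominator and that the resulting $x$-degree is only $Dd_x$ rather than something quadratic in $d_y$, and the isobaric-weight bookkeeping is precisely what delivers this; an equivalent but more tedious route would propagate denominator and degree bounds line by line through Algorithm~\ref{algo:Sigma_c}. A secondary delicate point is the $\softO(cD)$ cost asserted for the bivariate exponential, since that is what keeps the per-evaluation cost linear in $cD$ and hence the total cost within the stated bound.
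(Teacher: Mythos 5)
Your proof is correct and reaches the same bounds, but the route to the degree-in-$x$ bound differs from the paper's. The paper tracks the $\mathcal{E}_{d_x}(a)$ structure of Lemma~\ref{growth} through every step of Algorithm~\ref{algo:Sigma_c}: it shows $\mathcal{N}(P)\in\mathcal{E}_{d_x}(a)$ directly from the Newton-series formula, then observes that Hadamard products, the bivariate exponential, and the final $\exp$--$\int$--$\rec$ all preserve membership in $\mathcal{E}_{d_x}(a)$, whence coefficient $n$ has denominator $a^n$ and numerator of $x$-degree at most $nd_x$. You instead bypass the algorithm and argue on $\Sigma_cP$ itself: each $y$-coefficient is an elementary symmetric function of the sums $\sigma_I$, hence a homogeneous degree-$j$ symmetric polynomial in the $\alpha_i$, which by the fundamental theorem of symmetric functions together with isobaric-weight bookkeeping has denominator dividing $a^j$ and cleared numerator of $x$-degree at most $jd_x$. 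The two arguments are equivalent in substance (your isobaric accounting is exactly what the $\mathcal{E}_\alpha$ membership encodes), but yours is self-contained and does not rely on the $\mathcal{E}$-machinery the paper built for other purposes, while the paper's is shorter given that machinery is already available. For the complexity, both of you use evaluation--interpolation in $x$ at $1+d_xD$ points; the paper interleaves evaluation and interpolation step by step inside the algorithm, whereas you evaluate once up front, run the whole univariate algorithm per point, and interpolate once at the end, correctly noting that this requires checking that specialization at $\xi$ with $a(\xi)\neq0$ commutes with $\Sigma_c$ (which your cleared-denominator argument supplies) and correctly using $d_y\le cD$ to absorb the evaluation cost. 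No gaps.
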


\noindent This result is close to optimal. Experiments
suggest that for generic $P$ of bidegree $(d_x, d_y)$ the minimal polynomial of
$\alpha_{i_1}+\cdots+\alpha_{i_c}$ has bidegree $\left(d_x\binom{d_y-1}{c-1},
\binom{d_y}{c} \right)$. In particular, our degree bound is precise in $y$, and overshoots by a factor of 
${d_y}/{c}$ only in $x$. Similarly, the complexity result is quasi-optimal up to a factor of $d_x d_y$ only.

\begin{proof}
	The Newton series $\mathcal{N}(P)$ has the form
	$$\mathcal{N}(P) = \frac{a\deg_yP+yA(x,y)}{a - yB(x,y)} = \frac{a\deg_yP+yA(x,y)}{a}\sum_{n\ge 0}\frac{y^nB(x,y)^n}{a^n},$$
	with $\deg_xA, \deg_xB \le d_x$. Since both factors belong to $\mathcal{E}_{d_x}(a)$, Lemma~\ref{growth} implies that $\mathcal{N}(P)\in\mathcal{E}_{d_x}(a)$. Applying this same lemma repeatedly, we get that $\Sigma_c P\in\mathcal{E}_{d_x}(a)$ (stability under the integration of Algorithm~\ref{algo:Sigma_c} is immediate).
	Since $\Sigma_c P$ has degree $D$ wrt $y$, we deduce that $a^{D}\Sigma_cP$ is a polynomial that satisfies the desired bound.
 	By evaluation and interpolation at $1 + d_x D$ points, and Newton iteration for quotients of power series in $\KK[[y]]_{1+D}$ (Fact~\ref{fact:complexity}), the power series $\mathcal{N}(P)$ can be computed in $\softO(d_x D^2)$ ops. The power series $S$ is then computed from $\mathcal{N}(P)$ in $O(d_x D^2)$ ops. To compute $F$ we use evaluation-interpolation wrt $x$ at $1 + d_x D$ points, and fast exponentials of power series (Fact~\ref{fact:complexity}).
The cost of this step is $\softO(c d_x D^2)$ ops. Then, $\Newton(\Sigma_c P)$ is computed for $O(d_x D^2)$ additional ops. The last exponential is again computed by evaluation-interpolation and Newton iteration using $\softO(d_x D^2)$ ops.	
\end{proof}

\section{Diagonals}\label{sec:diagonals}

\subsection{Algebraic equations for diagonals} \label{subsec:Algebraic equations for diagonals}
The relation between diagonals of bivariate rational functions and algebraic series is classical~\cite{Furstenberg1967,Polya1921}. We recall here the usual derivation when~$\KK=\mathbb C$ while setting our notation.

Let $F(x,y)$ be a rational function in~$\mathbb{C}(x,y)$, whose denominator does not vanish at $(0,0)$.
Then the diagonal of $F$ is a convergent power series that can be represented for small enough~$t$ by a Cauchy integral
\[\Diag F(t)=\frac1{2\pi i}\oint{F(t/y,y)\frac{\mathrm{d}y}{y}},\]
where the contour is for instance a circle of radius~$r$ inside an annulus where~$(t/y,y)$ remains in the domain of convergence of~$F$.
This is the basis of an algebraic approach to the computation of the diagonal as a sum of residues of the rational function
\[\frac{P(t,y)}{Q(t,y)}:=\frac{1}{y}F\left(\frac{t}{y},y\right),\]
with~$P$ and~$Q$ two coprime polynomials. For~$t$ small enough, the circle can be shrunk around~0 and only the roots of~$Q(t,y)$ tending to~0 when~$t\rightarrow0$ lie inside the contour~\cite{HautusKlarner1971}. These are called the \emph{small branches}. 
Thus the diagonal is given as
\begin{equation}\label{eq:diagonal-as-residues}
\Diag F(t) = \sum_{\substack{Q(t,y_i(t)) = 0\\ \lim\limits_{t\rightarrow 0}{y_i(t)}=0}}{\Residue\left(\frac{P(t,y)}{Q(t,y)},y=y_i(t)\right)},
\end{equation}
where the sum is over the \emph{distinct} roots of~$Q$ tending to~0. We call their number the number of small branches of~$Q$ and denote it by~$\nsmall(Q)$.

Since the~$y_i$'s are algebraic and finite in number and residues are obtained by series expansion, which entails only rational operations, it follows that the diagonal is algebraic too. 
Combining the algorithms of the previous section gives Algorithm~\ref{algo:diagonal} that produces a polynomial equation for $\Diag F$. The correctness of this algorithm over an arbitrary field of characteristic~0 follows from an adaptation of
the arguments of Gessel and Stanley~\cite[Th.~6.1]{Gessel80},\cite[Th.~6.3.3]{Stanley99}.

\begin{example}
Let $d\ge 0$ be an integer, and let $F_d(x,y)$ be the rational function $1/(1-x-y)^{d+1}$. The diagonal of $F_d$ is equal to
\[ \sum_{n \ge 0} \binom{2n+d}{n}\binom{n+d}{d}t^n.\]
By the previous argument, it is an algebraic series, which is the sum of the residues of the rational function~$G_d$ of Example~\ref{ex:Bronstein} over its small branches (with $x$ replaced by $t$). In this case, the denominator is~$y-t-y^2$. It has one solution tending to~0 with~$t$; the other one tends to~$1$. Thus the diagonal is cancelled by the quadratic polynomial~\eqref{eq:pol-diag-ex}.
\end{example}

\begin{algo}
	Algorithm \textbf{AlgebraicDiagonal}($A/B$)

	\begin{algoenv}{Two polynomials $A$ and $B\in\KK[x,y]$, with~$B(0,0)\neq0$}{A polynomial $\Phi\in\KK[t,\Delta]$ such that $\Phi(t, \Diag A/B)=0$}
		\State $G \gets \frac{1}{y}\frac{A}{B}(\frac{t}{y},y)$
    \State Write $G$ as $P/Q$ with coprime polynomials $P$ and $Q$;
    \State $R(z) \gets \textbf{AlgebraicResidues}(P/Q)$
    \State $c \gets$ number of small branches of $Q$
		\State $\Phi(t,z) \gets \operatorname{numer}(\textbf{PureComposedSum}(R, c))$
		\State\Return $\Phi(t,\Delta)$
	\end{algoenv}
	\caption{Polynomial canceling the diagonal of a rational function}\label{algo:diagonal}
\end{algo}

\begin{example}For an integer~$d>0$, we consider the rational function
\[F_d(x,y)=\frac{x^{d-1}}{1-x^{d}-y^{d+1}},\]
of bidegree~$(d,d+1)$. The first step of the algorithm produces
\[G_d(t,y)=\frac{t^{d-1}}{y^{d}-t^{d}-y^{2d+1}},\]
whose denominator is irreducible with $d$ small branches. Running Algorithm~\ref{algo:diagonal} on this example, we obtain a polynomial~$\Phi_d$ annihilating $\Diag F_d$, which is experimentally irreducible and whose bidegrees for $d=1,2,3,4$ are $(2,3),(18,10),(120,35),(700,126)$.
From these values, it is easy to conjecture that the bidegree is given by~\[\left(d(d+1)\binom{2d-1}{d-1},\binom{2d+1}{d}\right),\] of exponential growth in the bidegree of $F_d$. In general, these bidegrees do not grow faster than in this example. In Theorem~\ref{thm:bound diagonals}, we prove bounds that are barely larger than the values above.
\end{example}

\subsection{Degree Bounds and Complexity}
The rest of this section is devoted to the derivation of bounds on the complexity of Algorithm~\ref{algo:diagonal} and on the size of the polynomial it computes, which are given in Theorem~\ref{thm:bound diagonals}.

\smallskip \noindent \textbf{Degrees.}
A bound on the bidegree of $\Phi$ will be obtained from the bounds successively given by Theorems~\ref{th:Bronstein} and~\ref{thm:platypus-bound}.

In order to follow the impact of the change of variables in the first step, we define the \emph{diagonal degree} of a polynomial $P(x,y) =\sum_{i,j}{a_{i,j}x^iy^j}$ as the integer
$\slope(P) := \sup\left\{i-j\ |\ a_{i,j}\neq 0\right\}.$ 
We collect the properties of interest in the following.
\begin{lem}\label{lemma:slope}
For any~$P$ and~$Q$ in~$\KK[x,y]$,
\begin{compactenum}
	\item[(1)] $\slope(P)\le\deg_xP$;
	\item[(2)] $\slope(PQ)=\slope(P)+\slope(Q)$;
	\item[(3)] there exists a polynomial $\tilde{P}\in\KK[x,y]$, such that\\
	$P(x/y,y)=y^{-\slope(P)}\tilde{P}(x,y)$, with $\tilde{P}(x,0)\neq0$ and
  \[\bideg(\tilde{P})\le\bideg(P)+(0,\slope(P));\]
	\item[(4)] $\bideg((\tilde{P})^\star)=(\deg_xP^\star,\slope(P^\star)+\deg_yP^\star)$.
\end{compactenum}
\end{lem}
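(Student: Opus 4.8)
The plan is to dispatch the four items in order, each being a short computation once the definition of $\slope$ is unwound. For~(1), every monomial $x^iy^j$ with $a_{i,j}\neq 0$ has $i\le\deg_xP$, hence $i-j\le i\le\deg_xP$; taking the supremum gives the claim. For~(2), I would argue by looking at the ``highest diagonal'' monomials: write $P=\sum_k P_k$ where $P_k$ collects the monomials with $i-j=k$, so that $\slope(P)$ is the largest $k$ with $P_k\neq 0$, and similarly for $Q$. The product of the top pieces $P_{\slope(P)}Q_{\slope(Q)}$ lives on the line $i-j=\slope(P)+\slope(Q)$, and I claim it does not vanish: substituting $x=uv$, $y=v$ turns each $P_k$ into $v^k$ times a nonzero polynomial in $u$, so $P(uv,v)=\sum_k v^k p_k(u)$ with $p_{\slope(P)}\neq 0$, and likewise for $Q$; then $P(uv,v)Q(uv,v)$ has $v$-valuation-from-above exactly $\slope(P)+\slope(Q)$ because $\KK[u]$ is a domain. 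This is the one place where a genuine (if tiny) argument is needed — everywhere else the inequalities are one-directional and automatic — so I expect~(2), specifically the non-cancellation of the leading diagonal part, to be the main (and only real) obstacle; it is handled cleanly by the $x=uv,\ y=v$ substitution, which is also exactly the tool for~(3).

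For~(3), set $s=\slope(P)$. For each monomial $a_{i,j}x^iy^j$ of $P$ we have $(x/y)^iy^j=x^iy^{j-i}=x^i y^{(j-i+s)-s}$, and $j-i+s\ge 0$ by definition of $s$, with equality attained for at least one monomial. Hence $P(x/y,y)=y^{-s}\tilde P(x,y)$ where $\tilde P(x,y):=\sum_{i,j}a_{i,j}x^iy^{j-i+s}\in\KK[x,y]$. By construction $\deg_x\tilde P=\deg_xP$ and $\deg_y\tilde P=\max_{i,j:a_{i,j}\neq 0}(j-i+s)\le\deg_yP+s$ (since $j\le\deg_yP$ and $-i\le 0$), giving $\bideg\tilde P\le\bideg P+(0,s)$. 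Moreover $\tilde P(x,0)=\sum_{i,j:\,j-i+s=0}a_{i,j}x^i$, which is precisely the image under $x\mapsto x,\ y\mapsto 0$ of the leading diagonal part $P_s$, hence nonzero because $s=\slope(P)$ means $P_s\neq 0$ and $P_s(x,y)=x^{?}\cdots$ is a nonzero polynomial all of whose monomials survive $y\mapsto 0$ after the shift.

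For~(4), apply~(3) to $P^\star$ in place of $P$: we get $P^\star(x/y,y)=y^{-\slope(P^\star)}\widetilde{P^\star}(x,y)$ with $\widetilde{P^\star}(x,0)\neq 0$. Since $P^\star(x/y,y)=y^{-\slope(P)}\tilde P(x,y)$ as well, and $\slope(P^\star)$ need not equal $\slope(P)$ — indeed here we compare $(\tilde P)^\star$ with $\widetilde{P^\star}$, not $\tilde P$ itself. Concretely, squarefreeness of $P^\star$ in $\KK[x,y]$ is preserved by the invertible substitution $x\mapsto x/y$ followed by multiplication by the monomial $y^{\slope(P^\star)}$ (units and monomial factors do not affect the squarefree part up to the factor $y$, and $\widetilde{P^\star}(x,0)\neq0$ kills any spurious $y$), so $(\tilde P)^\star$ and $\widetilde{P^\star}$ agree up to a constant; then the bidegree of $\widetilde{P^\star}$ is read off from~(3) applied to $P^\star$: the $x$-degree is $\deg_xP^\star$, and the $y$-degree is $\max\{j-i+\slope(P^\star): a^\star_{i,j}\neq 0\}=\slope(P^\star)+\deg_yP^\star$, the maximum being attained at a monomial of $P^\star$ of top $y$-degree. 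Assembling the four parts completes the proof. $\qed$
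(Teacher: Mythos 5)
Your plan for parts (1) and (3) is fine and matches the paper's approach (the paper's one‑line observation is that $\slope(P)=-\val_y P(x/y,y)$, from which (2) and (3) are immediate). But there are two concrete problems with your write‑up.

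First, the substitution in (2) is wrong as written. Under $x=uv,\ y=v$ a monomial $a_{i,j}x^iy^j$ becomes $a_{i,j}u^iv^{\,i+j}$, so $P(uv,v)=\sum_k v^k\bigl(\sum_{i+j=k}a_{i,j}u^i\bigr)$: the exponent of $v$ tracks $i+j$, not $i-j$, and $P_k(uv,v)$ is \emph{not} $v^k$ times a polynomial in $u$. What you want is the map $x\mapsto x/y$ (equivalently $x=uv,\ y=1/v$), which sends $a_{i,j}x^iy^j$ to $a_{i,j}x^iy^{\,j-i}$ so that $\slope(P)=-\val_y P(x/y,y)$; then multiplicativity follows from valuations being additive in the domain $\KK[x][y,y^{-1}]$, which is exactly what you were aiming for.

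Second, and more seriously, your argument for (4) asserts that $\max\{\,j-i:a^\star_{i,j}\neq0\,\}=\deg_yP^\star$, ``the maximum being attained at a monomial of $P^\star$ of top $y$-degree''. That is false unless the top‑$y$ monomial happens to have $i=0$. Take $P^\star=xy+1$: here $\slope(P^\star)=0$, $\deg_yP^\star=1$, but $\widetilde{P^\star}=x+1$ has $y$-degree $0$, not $1$. So applying (3) to $P^\star$ only gives the inequality $\bideg\bigl((\tilde P)^\star\bigr)\le(\deg_xP^\star,\ \slope(P^\star)+\deg_yP^\star)$ (with equality in the $x$-coordinate, since the transform preserves $x$-degree). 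The paper's statement of (4) with ``$=$'' should be read as ``$\le$''; only the inequality is used downstream, and your attempt to force equality via that ``maximum attained'' claim is the gap. Your identification $(\tilde P)^\star=\widetilde{P^\star}$ (via the automorphism of $\KK[x,y,y^{-1}]$ and the fact that $\tilde P(x,0)\neq0$ rules out spurious $y$-factors) is correct and is the same mechanism the paper uses.
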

\begin{proof}
Part~{\em(1)} is immediate. The quantity $\slope(P)$ is nothing else than $-\val_yP(x/y,y)$, which makes Parts~{\em(2)} and~{\em(3)} clear too. From there, we get the identity $\widetilde{PQ}=\tilde{P}\tilde{Q}$ for arbitrary~$P$ and~$Q$, whence $(\tilde{P})^\star=\widetilde{P^\star}$ and Part~{\em(4)} is a consequence of Parts~{\em(1)} and~{\em(3)}.
\end{proof}

Thus, starting with a rational function $F=A/B\in\KK(x,y)$, with $(d_x, d_y)$ a bound on the bidegrees of $A$ and $B$, and $(d_x^\star,d_y^\star)$ a bound on the bidegree of a squarefree part $B^\star$ of $B$, the first step of the algorithm constructs $G(t,y)=y^{\alpha}\frac{P}{Q}$, with polynomials~$P$ and~$Q$ and
\begin{gather}
\alpha = \slope(B) -\slope(A) - 1 \label{eq:defalpha}\\
\notag\bideg P \le  (d_x, \slope(A) + d_y),\quad 
\bideg Q \le  (d_x, \slope(B) + d_y),\\
\notag \bideg Q^\star \le(d_x^\star,d_x^\star+d_y^\star).
\end{gather}
These inequalities give bounds on the degrees in~$x$ of the numerator and denominator of~$G$. 

The rest of the computation depends on the sign of~$\alpha$.
If~$\alpha\ge0$, then the degrees in~$y$ of $y^\alpha P$ and $Q$ are bounded by $\slope(B)+d_y$, while if~$\alpha<0$, those of $P$ and $y^{-\alpha}Q$ are bounded by $\slope(A)+d_y+1$. Thus in both cases they are bounded by $d_x+d_y+\epsilon$, where
\begin{equation}
\epsilon=\begin{cases}1\qquad\text{if $\alpha<0$,}\\
	0\qquad\text{otherwise.}
\end{cases}\label{eq:epsilon}
\end{equation}
A squarefree part of the denominator has degree in $y$ bounded by $d_x^\star+d_y^\star+\epsilon$.
{}From there, Theorem~\ref{th:Bronstein} yields $\bideg R\le(D_x,D_y)$, with
\begin{align}
D_x&:=2d_x^\star(d_x-d_x^\star+d_y-d_y^\star+1)+d_x(2(d_x^\star+d_y^\star+\epsilon)-1),\label{eq:DxDy}\\
\notag D_y&:=d_x^\star+d_y^\star+\epsilon.
\end{align}
\smallskip \noindent \textbf{Small branches.} It is classical that for a polynomial~$P=\sum{a_{i,j}x^iy^j}\in\KK[x,y]$, the number of its solutions tending to~0 can be read off its Newton polygon. This polygon is the lower convex hull of the union of $(i,j)+\mathbb{N}^2$ for $(i,j)$ such that~$a_{i,j}\neq0$. The number of solutions tending to~0 is given by the minimal $y$-coordinate of its leftmost points. Since the number of small branches counts only distinct solutions, it is thus given by
\begin{equation}\label{eq:Nsmall}
\nsmall(P)=\nsmall(P^\star)=\val_y([x^{\val_xP^\star}]P^\star).
\end{equation}

The change of variables~$x\mapsto x/y$ changes the coordinates of the point corresponding to~$a_{i,j}$ into~$(i,j-i)$. This transformation maps the vertices of the original Newton polygon to the vertices of the Newton polygon of the Laurent polynomial $P(x/y,y)$. Multiplying by~$y^{\slope(P)}$ yields a polynomial and shifts the Newton polygon up by~$\slope(P)$, thus
\[\nsmall\left(y^{\slope(P)}P(x/y,y)\right)=\nsmall(P^\star)+\slope(P^\star).\]

The number of small branches of the denominator of~$G$ constructed in the first step of the algorithm is then given by
\begin{equation}\label{eq:nb_small}
c:=\nsmall{(B^\star)}+\slope(B^\star)+\epsilon.
\end{equation}

\smallskip \noindent \textbf{Complexity.} We now analyze the cost of Algorithm~\ref{algo:diagonal}. The first step does not require any arithmetic operation. Next, the computation of~$R$ takes $\softO((d_x+d_y)^6)$ ops. (see the comment after Theorem~\ref{th:Bronstein}). The number of small branches is obtained with no arithmetic operation from a squarefree decomposition computed in Algorithm~\ref{algo:Bronstein}. Finally, Algorithm~\ref{algo:Sigma_c} uses~$\softO(cD_x\binom{D_y}{c}^2)$ ops.

We now have the values required by Theorem~\ref{thm:platypus-bound}, which concludes the proof of the following bounds.

\begin{thm} \label{thm:bound diagonals}
	Let $F=A/B$ be a rational function in $\KK(x,y)$ with $B(0,0)\neq0$. Let $(d_x, d_y)$ (resp. $(d_x^\star,d_y^\star)$) be a bound on the bidegrees of $A$ and~$B$ (resp. a squarefree part of~$B$). Let~$\epsilon,D_x,D_y,c$ be defined as in Eqs.~(\ref{eq:epsilon},\ref{eq:DxDy},\ref{eq:nb_small}).
	Then there exists a polynomial $\Phi\in\KK[t,\Delta]$ such that $\Phi(t, \Diag F(t))=0$ and 
  \[\bideg \Phi\le\left(D_x\binom{D_y}{c},\binom{D_y}{c}\right).\]
Algorithm~\ref{algo:diagonal} computes it in~$\softO \left(cD_x\binom{D_y}{c}^2+(d_x+d_y)^6 \right)$ ops.
\end{thm}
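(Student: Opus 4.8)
The plan is to follow Algorithm~\ref{algo:diagonal} line by line, feeding the size bounds of Theorem~\ref{th:Bronstein} into those of Theorem~\ref{thm:platypus-bound}, with the first change of variables controlled by the diagonal-degree calculus of Lemma~\ref{lemma:slope} and the number of small branches read off a Newton polygon. The three paragraphs preceding the statement already assemble the pieces, so what remains is to glue them together and check correctness.

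First I would record the effect of the first step. Writing $G(t,y)=\frac1y F(t/y,y)=y^{\alpha}P/Q$ with $\alpha$ as in~\eqref{eq:defalpha}, Lemma~\ref{lemma:slope} (Parts (3)--(4)) gives the bidegree bounds on $P$, $Q$ and on a squarefree part $Q^\star$ displayed after~\eqref{eq:defalpha}. The one point deserving care is that the degree in $y$ of the numerator/denominator pair actually passed to Algorithm~\ref{algo:Bronstein} --- namely $y^\alpha P$ and $Q$ when $\alpha\ge0$, or $P$ and $y^{-\alpha}Q$ when $\alpha<0$ --- is in either case at most $d_x+d_y+\epsilon$, and that of a squarefree part of the denominator at most $d_x^\star+d_y^\star+\epsilon$, with $\epsilon$ as in~\eqref{eq:epsilon}. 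Plugging these into Theorem~\ref{th:Bronstein} gives $\bideg R\le(D_x,D_y)$ with $D_x,D_y$ as in~\eqref{eq:DxDy}.

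Next I would justify the choice $c=\nsmall(B^\star)+\slope(B^\star)+\epsilon$ of~\eqref{eq:nb_small}: using the Newton-polygon description~\eqref{eq:Nsmall} together with the fact that $x\mapsto x/y$ sends a support point $(i,j)$ to $(i,j-i)$ while multiplying by $y^{\slope}$ translates the polygon upward, the denominator of $G$ has exactly $c$ small branches. By~\eqref{eq:diagonal-as-residues}, $\Diag F$ is the sum of the $c$ residues of $P/Q$ at these branches; each such residue is a root of $R$ (Theorem~\ref{th:Bronstein} says $R$ cancels every residue of $P/Q$), and in fact the $c$ small-branch residues form a sub-multiset of the root multiset of $R$, so $\Diag F$ is a sum of $c$ roots of $R$. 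Moreover $c\le\deg_z R\le D_y$, since $\nsmall(B^\star)\le\deg_y B^\star\le d_y^\star$ and $\slope(B^\star)\le\deg_x B^\star\le d_x^\star$ by Part (1) of Lemma~\ref{lemma:slope}; hence Theorem~\ref{thm:platypus-bound} applies to $R$ (viewed in $\KK[t,z]$, of bidegree at most $(D_x,D_y)$) with parameter $c$, and yields that $\Phi=\operatorname{numer}(\Sigma_c R)$ satisfies $\Phi(t,\Diag F(t))=0$ with $\bideg\Phi\le\bigl(D_x\binom{D_y}{c},\binom{D_y}{c}\bigr)$. That $\Sigma_c R$ vanishes at $\Diag F$ over any field of characteristic zero, not merely over $\mathbb{C}$, is the Gessel--Stanley argument cited just before Algorithm~\ref{algo:diagonal}.

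Finally, for the complexity I would add the costs of the four substantive steps: the change of variables is free; \textbf{AlgebraicResidues} computes $R$ in $\softO\bigl((d_x+d_y)^6\bigr)$ ops., using the multiplicity-independent bound $\bigO(d_x^\star d_y^\star d_y^4)$ from the remark after Theorem~\ref{th:Bronstein} together with $d_x^\star,d_y^\star,d_y=\bigO(d_x+d_y)$; $c$ is read off, with no arithmetic, from the squarefree decomposition already produced inside Algorithm~\ref{algo:Bronstein}; and \textbf{PureComposedSum}$(R,c)$ costs $\softO\bigl(cD_x\binom{D_y}{c}^2\bigr)$ by Theorem~\ref{thm:platypus-bound}. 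Summing gives the announced bound. I do not expect a genuinely hard step: the only delicate bookkeeping is that of the second and third paragraphs --- extracting the uniform $y$-degree bound $d_x+d_y+\epsilon$ that makes $D_y$ come out as stated, and tracking how the small-branch count transforms under $x\mapsto x/y$ --- which is precisely what the material preceding the theorem has set up.
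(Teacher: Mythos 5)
Your proposal is correct and tracks the paper's own derivation essentially step for step: Lemma~\ref{lemma:slope} to control the substitution $x\mapsto t/y$, Theorem~\ref{th:Bronstein} for the bidegree of $R$, the Newton-polygon count for $c$, and Theorem~\ref{thm:platypus-bound} for $\Sigma_c R$, with the costs summed exactly as in the paragraphs preceding the statement. The only additions you make beyond the paper's prose are welcome sanity checks that the paper leaves implicit, namely that $c\le D_y$ so that Theorem~\ref{thm:platypus-bound} applies, and that the small-branch residues are indeed a sub-multiset of the roots of $R$ so that $\Diag F$ is a sum of $c$ of them.
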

A general bound on $\bideg \Phi$ depending only on a bound $(d,d)$ on the bidegree of the input can be deduced from the above as
\[\bideg\Phi\le(d(4d+3),1)\times\binom{2d+1}{d}.\]

\subsection{Optimization}\label{subsec:Optimization}
Assume that the denominator of~$F(x/y)/y$ is already partially factored as $Q(y)=\tilde{Q}(y)\prod_{i=1}^k{(y-y_i(x))}$, where the $y_i$ are~$k$ distinct \emph{rational} branches among the~$c$ small branches of~$Q$. Then their corresponding (rational) residues $r_i$ contribute to the diagonal; therefore it is only necessary to invoke Algorithm~\ref{algo:diagonal} on $(\tilde{Q},c-k)$, which produces a polynomial $\tilde{\Phi}$. Then the polynomial $\Phi(t,\Delta) = \tilde{\Phi}(t, \Delta-\sum_{i}{r_i})$ cancels the diagonal of $F$.

In particular, this optimization applies systematically for the factor~$y^{-\alpha}$ when $\alpha<0$ (or equivalently $\epsilon=1$) in the algorithm. In this case, it yields a polynomial~$\Phi$ with smaller degree than the original algorithm:
\[\deg_\Delta\Phi\le\binom{d_x^\star+d_y^\star}{\nsmall(B^\star)+\slope(B^\star)}.\]
(A sharper bound on the degree in~$t$ can be derived as well.)

\subsection{Generic case}\label{subsec:Generic case}
The bounds from Theorem~\ref{thm:bound diagonals} on the bidegree of $\Phi$ are slightly pessimistic wrt the variable $t$, but generically tight wrt the variable~$\Delta$, as will be proved in Proposition~\ref{prop:generic} below. We first need a lemma.

\begin{lem}\label{lemma:galois groups}
	Let $\KK$ be a field of characteristic $0$, and $P\in \KK[y]$ be a polynomial of degree $d$, with Galois group $\mathfrak{S}_d$ over $\KK$. Assume that the roots $\alpha_1,\ldots\alpha_d$ of $P$ are algebraically independent over $\mathbb{Q}$.
	Then, for any $c\le d$, the degree $\binom{d}{c}$ polynomial $\Sigma_cP$ is irreducible in $\KK[y]$.
\end{lem}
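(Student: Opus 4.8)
The goal is to show that $\Sigma_c P$, whose roots are the $\binom{d}{c}$ sums $\alpha_{i_1}+\cdots+\alpha_{i_c}$ with $i_1<\cdots<i_c$, is irreducible over $\KK$. The natural approach is to show that the Galois group $\mathfrak{S}_d$ of $P$ acts transitively on these roots; irreducibility then follows because a polynomial whose splitting field has Galois group acting transitively on its roots (and which has no repeated roots) is irreducible. Transitivity of the $\mathfrak{S}_d$-action on $c$-subsets of $\{\alpha_1,\dots,\alpha_d\}$ is essentially trivial: any $c$-subset can be mapped to any other by a permutation. So the only real content is to check that the $\binom{d}{c}$ sums are pairwise \emph{distinct} — i.e. that $\Sigma_c P$ is squarefree — since otherwise the roots of $\Sigma_c P$ (as a set) could form a proper sub-orbit structure and $\Sigma_c P$ would be a proper power of its squarefree part. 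This is exactly where the hypothesis that $\alpha_1,\dots,\alpha_d$ are algebraically independent over $\QQ$ enters.

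First I would record the precise reduction: let $\beta_I = \sum_{i\in I}\alpha_i$ for each $c$-subset $I\subseteq\{1,\dots,d\}$. The group $G=\operatorname{Gal}(\KK(\alpha_1,\dots,\alpha_d)/\KK)\cong\mathfrak{S}_d$ permutes the $\alpha_i$ and hence permutes the multiset $\{\beta_I\}_I$; moreover $\Sigma_c P = \prod_I (y-\beta_I)$ has coefficients fixed by $G$, hence in $\KK$ (this is already stated in the text). If all $\beta_I$ are distinct, then $G$ acts on the \emph{set} of roots of $\Sigma_c P$, this action is transitive (given $I,J$ choose any $\sigma\in\mathfrak{S}_d$ with $\sigma(I)=J$, then $\sigma\cdot\beta_I=\beta_J$), and a standard argument — the orbit of any single root $\beta_I$ under $G$ gives a monic factor $\prod_{\sigma}(y-\sigma\beta_I)\in\KK[y]$ dividing $\Sigma_c P$, and transitivity forces this factor to be all of $\Sigma_c P$, which is therefore (being squarefree) the minimal polynomial of $\beta_I$ over $\KK$ — yields irreducibility.

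The key step, then, is: \emph{the $\beta_I$ are pairwise distinct}. Suppose $\beta_I=\beta_J$ with $I\neq J$. Writing $I\setminus J$ and $J\setminus I$ (which are disjoint, nonempty, and of equal cardinality), the equation becomes $\sum_{i\in I\setminus J}\alpha_i=\sum_{j\in J\setminus I}\alpha_j$, a nontrivial $\QQ$-linear — in fact $\ZZ$-linear — relation among distinct $\alpha_k$'s with not-all-zero coefficients. This contradicts algebraic independence of $\alpha_1,\dots,\alpha_d$ over $\QQ$ (algebraic independence is far stronger than $\QQ$-linear independence, so in particular no such relation can hold). Hence $\Sigma_c P$ is squarefree, and combined with the transitivity observation above, it is irreducible over $\KK$.

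I expect no serious obstacle: the argument is a clean combination of a transitive group action with the squarefreeness coming from linear independence. The one point that deserves a careful sentence is the passage from "$G$ acts transitively on the roots and $\Sigma_c P$ is squarefree" to "$\Sigma_c P$ is irreducible" — one must note that $\Sigma_c P$ need not a priori be the minimal polynomial of $\beta_I$, but squarefreeness plus transitivity identifies it with the product over the full $G$-orbit of $\beta_I$, which \emph{is} (up to the known constant leading coefficient $1$) that minimal polynomial. A secondary subtlety is that algebraic independence of the $\alpha_i$ over $\QQ$ is used only through its weakest consequence, $\QQ$-linear independence; it would be worth remarking that the hypothesis could be weakened accordingly, though there is no need to do so for the statement as given.
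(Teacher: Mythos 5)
Your proof is correct. It takes a mildly different route from the paper's, though both are Galois-theoretic and both invoke the algebraic independence hypothesis for the same underlying fact. You establish irreducibility by showing two things explicitly: that $\mathfrak{S}_d$ acts transitively on the sums $\beta_I$ (trivial), and that the $\beta_I$ are pairwise distinct, i.e.\ $\Sigma_cP$ is squarefree (this is where algebraic independence, via $\QQ$-linear independence, is used). The paper instead argues directly at the level of field degrees: it observes that $\Sigma=\alpha_1+\cdots+\alpha_c$ is a root of $\Sigma_cP$, computes the stabilizer of $\Sigma$ in $\operatorname{Gal}(\KK(\alpha_1,\dots,\alpha_d)/\KK)\cong\mathfrak{S}_d$ to be $\mathfrak{S}_c\times\mathfrak{S}_{d-c}$ (again by algebraic independence), and concludes $[\KK(\Sigma):\KK]=\binom{d}{c}$, which forces the monic degree-$\binom{d}{c}$ polynomial $\Sigma_cP$ to coincide with the minimal polynomial of $\Sigma$. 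The two arguments are dual via orbit--stabilizer: your "distinctness of the $\beta_I$" is precisely the "only if" direction of the paper's stabilizer computation. The paper's version is slightly more compact in that squarefreeness drops out as a byproduct of the degree count rather than being established separately; yours is more explicit about where each ingredient is used, and your closing remark that only $\QQ$-linear independence of the roots is actually needed is a fair observation that the paper does not make.
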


\begin{proof}
	Since $\Sigma = \alpha_1 + \cdots+\alpha_c$ is a root of $\Sigma_c P$, it suffices to prove that $\KK(\Sigma)$ has degree $\binom{d}{c}$ over $\KK$.
	The $\alpha_i$'s being algebraically independent, any permutation $\sigma \in \mathfrak{S}_d$ of all the $\alpha_i$'s that leaves~$\Sigma$ unchanged has to preserve $\alpha_{c+1}+\cdots+\alpha_d$ as well. It follows that $\KK(\alpha_1,\ldots,\alpha_d)$ has degree~$c!(d-c)!$ over~$\KK(\Sigma)$ and degree~$d!$ over~$\KK$, so that~$\KK(\Sigma)$ has degree~$\binom{d}{c}$ over~$\KK$
\end{proof}

\begin{prop}\label{prop:generic}
	Let $A$ be a polynomial in $\QQ[x,y]_{d_x,d_y}$, and
	\[B(x,y)=\sum_{i\le d_x,j\le d_y}{b_{i,j}x^iy^j}\in\QQ[(b_{i,j});x,y],\]
	where the $b_{i,j}$ are indeterminates.
	Then the polynomial computed by Algorithm~\ref{algo:diagonal} with input $A/B$ is irreducible of degree $\binom{d_x+d_y}{d_x}$ over $\KK=\QQ((b_{i,j});x)$.
\end{prop}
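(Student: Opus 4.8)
The plan is to trace the three transformations performed by Algorithm~\ref{algo:diagonal} --- the change of variables, then \textbf{AlgebraicResidues}, then \textbf{PureComposedSum} --- and to check that for a denominator with indeterminate coefficients each of them is maximally non-degenerate, so that the arithmetic of $\Phi$ is controlled by the full symmetric group $\mathfrak S_N$ on $N:=d_x+d_y$ letters. Write $\KK=\QQ((b_{i,j}))(t)$ for the base field of the statement (after the substitution $x\mapsto t/y$ the surviving transcendental is $t$). First I would make the first step explicit: since $\slope(B)=d_x$ while $\slope(A)\le\deg_xA\le d_x-1$, one has $\alpha=\slope(B)-\slope(A)-1\ge0$ and $\epsilon=0$, so with the notation of Lemma~\ref{lemma:slope} the rational function is $G=y^{\alpha}\tilde A/\tilde B$; genericity makes $\tilde B$ coprime to $y$ and to the fixed polynomial $\tilde A$, so up to a unit of $\KK$ one may take $Q=\tilde B$ and $P=y^{\alpha}\tilde A$, with $\deg_yQ=N$ and, by~\eqref{eq:nb_small}, $c=\nsmall(B)+\slope(B)=d_x$.

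The key structural fact is that the $y$-coefficients $c_0,\dots,c_N$ of $Q$ are algebraically independent over $\QQ(t)$: indeed $c_k=\sum_{j-i=k-d_x}b_{i,j}t^i$, and for distinct $k$ these are linear forms in disjoint fresh batches of the $b_{i,j}$. Hence $Q/c_N$ is a copy of the generic monic polynomial of degree $N$, and I would invoke the classical theory of the generic polynomial (using that enlarging $\QQ(t)(c_0/c_N,\dots,c_{N-1}/c_N)$ to the purely transcendental extension $\KK$ leaves the Galois group unchanged) to conclude that the roots $y_1,\dots,y_N$ of $Q$ are algebraically independent over $\QQ$, that $\operatorname{Gal}(Q/\KK)=\mathfrak S_N$, and that $c_N$ is transcendental over $\QQ(t,y_1,\dots,y_N)$. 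For the second step, since $Q$ is separable its squarefree decomposition has $m=1$, and unwinding Algorithm~\ref{algo:Bronstein} gives $R=\Resultant_y(P-zQ',Q)$ up to the factor $\gcd(P,Q')$, which is trivial by genericity; thus the $z$-roots of $R$ are exactly the residues $r_i=P(y_i)/\bigl(c_N\prod_{k\neq i}(y_i-y_k)\bigr)$. A short non-vanishing computation shows these are pairwise distinct ($r_i-r_j$ vanishes iff a polynomial in the algebraically independent $t,y_1,\dots,y_N$ does; both of its summands carry the factor $(y_i-y_j)$, and specialising $y_j\to y_i$ in the quotient leaves $-2P(y_i)\prod_{k\neq i,j}(y_i-y_k)\neq0$). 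So $R$ is separable of degree $N$, and since each $r_i$ is the value at $y_i$ of a fixed rational function over $\KK$, the bijection $y_i\mapsto r_i$ is Galois-equivariant and $\operatorname{Gal}(R/\KK)=\mathfrak S_N$ acts on $\{r_1,\dots,r_N\}$ by permutation of indices.

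The step I expect to be the main obstacle is the last one, because Lemma~\ref{lemma:galois groups} cannot be applied to $R$ verbatim: the $r_i$ are \emph{not} algebraically independent, since $\deg_yP\le N-2<\deg_yQ-1$ forces the residue at infinity to vanish, so $\sum_{i=1}^Nr_i=0$. I would get around this representation-theoretically. The $\QQ$-space $W=\{\lambda\in\QQ^N:\sum_i\lambda_ir_i=0\}$ is stable under $\mathfrak S_N$ (by the equivariance above), contains $\mathbf 1$, and is proper (the $r_i$ are not all zero, e.g.\ $r_1\neq r_2$); but the only $\mathfrak S_N$-submodules of $\QQ^N=\QQ\mathbf 1\oplus S$ (with $S$ the standard irreducible representation) that contain $\mathbf 1$ are $\QQ\mathbf 1$ and $\QQ^N$, so $W=\QQ\mathbf 1$. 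Consequently, for distinct $c$-subsets $T,T'$ of $\{1,\dots,N\}$ the vector $\mathbf 1_T-\mathbf 1_{T'}$ has coordinate sum $0$ and hence lies in $W$ only if it vanishes, so the $\binom Nc$ sums $\sum_{i\in T}r_i$ are pairwise distinct. Therefore $\Sigma_cR$ is separable of degree $\binom Nc$, and $\operatorname{Gal}(R/\KK)=\mathfrak S_N$ permutes its roots by the transitive action on $c$-subsets, whence $\Sigma_cR$ is irreducible over $\KK$. Since $\Phi=\operatorname{numer}(\Sigma_cR)$ differs from $\Sigma_cR$ only by a factor in $\KK^\times$ and $c=d_x$, this gives that $\Phi$ is irreducible over $\KK$ of degree $\binom{d_x+d_y}{d_x}$ in $\Delta$, as claimed. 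In short, the genuine content is replacing the hypothesis ``roots algebraically independent'' of Lemma~\ref{lemma:galois groups}, which fails here, by ``the module of linear relations among the residues is $\QQ\mathbf 1$'', which is exactly what separates the $c$-fold sums and keeps the Galois action transitive.
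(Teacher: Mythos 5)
Your proof is correct, and it also exposes a genuine gap in the paper's argument. The paper asserts that the residues $r_i = P(x,y_i)/\partial_yQ(x,y_i)$ are algebraically independent over $\QQ$, ``since any algebraic relation between them would induce one for the $y_i$'s by clearing out denominators'', and then applies Lemma~\ref{lemma:galois groups} to $R$. You are right that this claim is false: because $A\in\QQ[x,y]_{d_x,d_y}$ and $\slope(B)=d_x$, the substitution step gives $\deg_y P\le d_x+d_y-2=\deg_y Q-2$, and the Lagrange identity $\sum_i P(y_i)/Q'(y_i)=[y^{N-1}]P/(\text{leading coeff.\ of }Q)$ vanishes \emph{identically} as a rational function of the algebraically independent $y_i$'s. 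So clearing denominators in $z_1+\cdots+z_N$ produces the zero polynomial, the residues satisfy the nontrivial $\QQ$-linear relation $\sum_i r_i=0$, and the hypothesis of Lemma~\ref{lemma:galois groups} is not met by $R$. (The paper's other use of the claim, the injectivity of $\operatorname{Gal}(Q/\KK)\to\operatorname{Gal}(R/\KK)$, actually needs only the pairwise distinctness of the $r_i$, which you also prove, so that part is salvageable as written.)

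Your representation-theoretic replacement is exactly what the argument needs, and it is clean. The only fact about the roots of $R$ that the irreducibility of $\Sigma_cR$ requires is that the stabilizer in $\mathfrak S_N$ of $\sum_{i\in T}r_i$ equals $\mathfrak S_c\times\mathfrak S_{N-c}$, equivalently that the $\binom{N}{c}$ sums over $c$-subsets are pairwise distinct. Proving that the $\mathfrak S_N$-stable $\QQ$-module $W=\{\lambda\in\QQ^N:\sum\lambda_ir_i=0\}$ is exactly $\QQ\mathbf 1$ (via $\QQ^N=\QQ\mathbf 1\oplus S$, with $S$ the irreducible standard representation, together with $\mathbf 1\in W$ and $W\ne\QQ^N$) delivers precisely this, since $\mathbf 1_T-\mathbf 1_{T'}$ has zero coordinate sum and therefore lies in $W$ only if $T=T'$. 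In effect you have identified the correct weakening of the hypothesis of Lemma~\ref{lemma:galois groups}---replace ``roots algebraically independent'' by ``the module of $\QQ$-linear relations among the roots is $\QQ\mathbf 1$''---and verified it for $R$; this is the content the paper's proof is missing. The earlier stages (algebraic independence of the coefficients $c_k$ of $Q$ over $\QQ(t)$, $\operatorname{Gal}(Q/\KK)=\mathfrak S_N$ because $\KK$ is purely transcendental over the field generated by those coefficients, distinctness and Galois-equivariance of the $r_i$, $c=d_x$) match the paper's route, with your coefficient-based argument for the Galois group being a slightly more explicit variant of the paper's specialize-at-$x=1$-then-lift reference to van der Waerden.
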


\begin{proof}
	First apply the change of variables to obtain $G=P/Q$, with $Q(x,y) = \sum_{i, j}{b_{i,j}x^iy^{d_x-i+j}}$. Denote $d=d_x+d_y$. Then, the polynomial $Q(1,y)$ has the form $\sum_{j\le d}{t_jy^j}$ where the $t_j$'s are algebraically independent over $\mathbb{Q}$. Therefore, $Q(1,y)$ has Galois group $\mathfrak{S}_{d}$ over $\QQ(t_0,\ldots,t_d)$ and its roots are algebraically independent over $\mathbb{Q}$~\cite[\S 57]{Waerden49}. This property lifts to $Q(x,y)$ \cite[\S 61]{Waerden49}, which thus has Galois group $\mathfrak{S}_d$ and algebraically independent roots, denoted $y_1,\ldots,y_d$.
	
	Now define the polynomial $R(x,y) = \prod_{i}{(y-P(x,y_i)/\partial_yQ(x,y_i))}$. Since $Q$ has simple roots, this is exactly the polynomial that is computed by Algorithm~\ref{algo:Bronstein}. The family $\left\{P(x,y_i)/\partial_yQ(x,y_i)\right\}$ is algebraically independent, since any algebraic relation between them would induce one for the $y_i$'s by clearing out denominators. In particular, the natural morphism $\operatorname{Gal}(Q/\KK)=\mathfrak{S}_{d}\rightarrow \operatorname{Gal}(R/\KK)$ is injective, whence an isomorphism. (Here, $\operatorname{Gal}(P/\KK)$ denotes the Galois group of $P\in\KK[y]$ over~$\KK$.) Since an immediate investigation of the Newton polygon of $Q$ shows that it has $d_x$ small branches, we conclude using Lemma~\ref{lemma:galois groups}.
\end{proof}

Proposition~\ref{prop:generic} implies that
for a generic rational function~$A/B$ with $A\in\KK[x,y]_{d,d}$ and $B\in\KK[x,y]_{d+1,d+1}$, the degree of $\Phi$ in $\Delta$ is $\binom{2d}{d}$. This is indeed observed on random examples.

\begin{example} 
	We consider a rational function $F(x,y)=1/{B(x,y)}$, where $B(x,y)$ is a dense polynomial of bidegree $(d,d)$ chosen at random. For~$d=1,2,3,4$, algorithm \textbf{AlgebraicDiagonal}($F$) produces \emph{irreducible} outputs with bidegrees
$(2,2)$, $(16,6)$, $(108, 20)$, $(640, 70) $, that are matched by the formulas 
\begin{equation}\label{eq:bound_generic}
\left(2d^2\binom{2d-2}{d-1},\binom{2d}{d}\right),
\end{equation}
so that the bound on $\deg_\Delta\Phi$ is tight in this case and the irreducibility of the output shows that Theorem~\ref{thm:bound diagonals} cannot be improved further.
\end{example}

\smallskip

\section{Walks}\label{sec:walks}
The exponential degree of the minimal polynomial of a diagonal proved in Proposition~\ref{prop:generic} concerns more generally other sums of residues, since this is the step where the exponential growth of the algebraic equations appears. This includes in particular constant terms of rational functions in ${\mathbb C}(x)[[y]]$, that can also be written as contour integrals of rational functions around the origin.

By contrast, sums of residues of a rational function always satisfy a differential equation of only polynomial size~\cite{BoChChLi10}.
Thus, when an algebraic function appears to be connected to a sum of residues of a rational function, the use of this differential structure is much more adapted to the computation of series expansions, instead of going through a potentially large polynomial. 

As an example where this phenomenon occurs naturally, we consider here the enumeration of unidimensional lattice walks, following Banderier and
Flajolet~\cite{BanderierFlajolet2002} and
Bousquet-M\'elou~\cite{Bousquet2006}. Our goal in this section is to study,
from the algorithmic perspective, the series expansions of various generating
functions (for bridges, excursions, meanders) that have been identified as algebraic~\cite{BanderierFlajolet2002}. 
One of
our contributions is to point out that although algebraic series
can be expanded
fast~\cite{ChudnovskyChudnovsky1986,ChudnovskyChudnovsky1987a,BostanChyzakLecerfSalvySchost2007},
the pre-computation of a polynomial equation could have prohibitive cost.
We overcome this
difficulty by pre-computing differential (instead of polynomial) equations
that have polynomial size only, and using them to compute series expansions to
precision~$N$ for bridges, excursions and meanders in time quasi-linear
in~$N$.

\subsection{Preliminaries}

We start with some vocabulary on lattice walks.
A \emph{simple step} is a vector $(1, u)$ with $u\in\ZZ$. 
A \emph{step set} $S$ is a finite set of simple steps.
A \emph{unidimensional walk} in the plane $\mathbb{Z}^2$ built from  $S$ is a finite sequence $(A_0,A_1,\ldots,A_n)$ of points in $\mathbb{Z}^2$, such that $A_0 = (0,0)$ and $\overrightarrow{A_{k-1}A_k} = (1, u_k)$ with $(1, u_k)\in S$. In this case $n$ is called the \emph{length} of the walk, and $S$ is the \emph{step set} of the walk. The $y$-coordinate of the endpoint $A_n$, namely $\sum_{i=1}^{n}{u_i}$, is called the final altitude of the walk.
The characteristic polynomial of the step set $S$ is
$$\Gamma_S(y) = \sum_{(1,u)\in S}{y^u}.$$

Following Banderier and Flajolet, we consider three specific families of walks: bridges, excursions and meanders~\cite{BanderierFlajolet2002}. \emph{Bridges} are walks with final altitude~$0$, \emph{meanders} are walks confined to the upper half plane, and \emph{excursions} are bridges that are also meanders.

We define the full generating power series of walks \[W_S(x,y) = \sum_{n\ge 0,k\in\ZZ}{w_{n,k}x^ny^k} \; \in \mathbb{Z}[y,y^{-1}][[x]],\] where $w_{n,k}$ is the number of walks with step set~$S$, of length $n$ and final altitude $k$. We denote by $B_S(x)$ (resp. $E_S(x)$, and $M_S(x)$) the power series $\sum_{n\ge0}{u_{n}x^n}$, where $u_n$ is the number of bridges (resp. excursions, and meanders) of length~$n$ with step set~$S$.

We omit the step set~$S$ as a subscript when there is no ambiguity. Several properties of the power series $W$, $B$, $E$ and $M$ are classical:
\begin{fact}\cite[\S2.1-2.2]{BanderierFlajolet2002}\label{fact:walks}\ The power series $W$, $B$, $E$ and $M$ satisfy
	\begin{compactenum}
	\item[(1)] $W(x,y)$ is rational and $W(x,y) =1/(1-x\Gamma(y))$;
	\item[(2)] $B(x)$, $E(x)$ and $M(x)$ are algebraic;
	\item[(3)] $B(x)=[y^0]W(x,y)$;
	\item[(4)] $E(x) = \exp\left(\int{(B(x)-1)/x\,\mathrm{d}x}\right)$.
	\end{compactenum}
\end{fact}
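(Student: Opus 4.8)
This is a classical result of Banderier and Flajolet~\cite{BanderierFlajolet2002}; here is the plan I would follow. Parts~(1) and~(3) I would settle first, straight from the definitions: a walk of length~$n$ with step sequence $(1,u_1),\dots,(1,u_n)$ contributes the monomial $x^ny^{u_1+\cdots+u_n}$ to $W$, so summing over all step sequences of each fixed length gives $W(x,y)=\sum_{n\ge0}\bigl(x\sum_{(1,u)\in S}y^u\bigr)^n=1/(1-x\Gamma(y))$ as an identity in $\ZZ[y,y^{-1}][[x]]$, which is~(1); and $B(x)=\sum_{n\ge0}w_{n,0}x^n$ is by definition $[y^0]W(x,y)$, which is~(3).

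For~(2) and~(4), the plan is to expose the small-branch structure of $W$, in the spirit of Section~\ref{sec:diagonals}. Let $-c$ be the lowest altitude of a step of $S$; I may assume $c\ge1$ and that $S$ also has an up-step, since otherwise bridges and excursions are trivial. Then $\Gamma(y)=y^{-c}P(y)$ for some $P\in\ZZ[y]$ with $P(0)\neq0$, so that $W(x,y)=y^c/\bigl(y^c-xP(y)\bigr)$. The Newton polygon of the kernel $y^c-xP(y)$, viewed as a polynomial in $y$ over $\KK((x))$, shows --- exactly as in~\eqref{eq:Nsmall} --- that it has $c$ roots $u_1(x),\dots,u_c(x)$ with $u_i(0)=0$ (the small branches), the remaining roots tending to infinity. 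Each $u_i$ is algebraic over $\KK(x)$, being a root of $y^c-xP(y)\in\KK(x)[y]$; hence any rational expression in the $u_i$ is an algebraic function, and $\prod_{i=1}^cu_i(x)\sim P(0)\,x$ as $x\to0$ (from $u_i^c=xP(u_i)$).

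Next I would read off $B$, $E$ and $M$. For $B$: writing $B(x)=[y^0]W(x,y)=\frac{1}{2\pi i}\oint \frac{y^{c-1}}{y^c-xP(y)}\,\mathrm{d}y$ around the origin, and noting that for small~$x$ the contour can be shrunk so that only the small branches lie inside~\cite{HautusKlarner1971}, the residue theorem expresses $B$ as a sum of residues over the small branches, which is algebraic by the argument of Section~\ref{sec:diagonals}; computing the residue at a simple root and differentiating $u_i^c=xP(u_i)$ in $x$ gives the closed form
\[B(x)=x\sum_{i=1}^c\frac{u_i'(x)}{u_i(x)}=x\,\frac{\mathrm{d}}{\mathrm{d}x}\log\prod_{i=1}^cu_i(x).\]
For $E$ and $M$ I would run the kernel method of~\cite{BanderierFlajolet2002} on the bivariate meander series $\widehat M(x,y)=\sum_{n,k\ge0}m_{n,k}x^ny^k$, for which $E(x)=[y^0]\widehat M(x,y)$ and $M(x)=\widehat M(x,1)$: writing the step-by-step functional equation for meanders, multiplying by the kernel $y^c-xP(y)$, and substituting the $c$ small branches $y=u_i(x)$ yields a linear system for the boundary unknowns whose solution expresses $\widehat M(x,y)$ --- and hence $E$ and $M$ --- as a rational function of $x$, the $u_i(x)$ and their symmetric functions; in particular $E(x)=\kappa\,\prod_{i=1}^cu_i(x)/x$ for a nonzero constant $\kappa$. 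This establishes~(2), and~(4) then drops out by combining the two closed forms: $\frac{\mathrm{d}}{\mathrm{d}x}\log E(x)=\frac{\mathrm{d}}{\mathrm{d}x}\log\prod_iu_i(x)-\frac1x=\frac{B(x)}{x}-\frac1x=\frac{B(x)-1}{x}$, and integrating, with the constant fixed by $E(0)=1$ (equivalently $\kappa P(0)=1$), gives $E(x)=\exp\bigl(\int(B(x)-1)/x\,\mathrm{d}x\bigr)$.

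The step I expect to be the main obstacle is the kernel-method analysis underlying $E$ and $M$: writing the meander functional equation correctly, checking that each $\widehat M(x,u_i(x))$ is a bona fide element of $\KK[[x]]$ so that substituting the small branches is legitimate, and justifying the contour-shrinking that confines every residue sum to exactly the $c$ small branches. Parts~(1) and~(3), by contrast, need nothing beyond a geometric-series identity.
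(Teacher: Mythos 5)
The paper offers no proof of Fact~\ref{fact:walks}: it is stated as a citation of \cite[\S 2.1--2.2]{BanderierFlajolet2002}, and your reconstruction follows exactly that source's derivation --- parts (1) and (3) from the definitions, part (2) via the algebraicity of the small branches of the kernel $y^{c}-xP(y)$, and part (4) by combining the residue computation $B(x)=x\,\frac{\mathrm{d}}{\mathrm{d}x}\log\prod_i u_i(x)$ with the kernel-method product formula $E(x)=\kappa\prod_i u_i(x)/x$. One small slip: $\prod_i u_i(x)\sim(-1)^{c-1}P(0)\,x$ rather than $P(0)\,x$ (the $c$ small branches are, to first order, the $c$-th roots of $xP(0)$, whose product carries the sign $(-1)^{c-1}$), so the normalization constant is $\kappa=(-1)^{c-1}/P(0)$; this sign disappears in the logarithmic derivative and therefore does not affect the identity in (4). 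The steps you flag as delicate --- that each $\widehat M(x,u_i(x))$ is a legitimate element of $\KK[[x]]$ so that substituting the small branches into the kernel equation is valid, and the contour-shrinking argument confining the residue sum to exactly the small branches --- are indeed where the technical content lies, and are treated carefully in the cited source.
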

Our main objective in what follows is to study the efficiency of computing the power series expansions of the series $B$, $E$ and $M$. In the next two sections, we first study two previously known methods, then we design a new one.

\subsection{Expanding the generating power series}
We denote by $u^-$ (resp. $u^+$) the largest $u$ such that $(1,-u)\in S$ (resp. $(1,u)\in S$) and denote by $d$ the sum $u^- + u^+$. The integer $d$ measures the vertical amplitude of $S$; this makes $d$ a good scale for measuring the complexity of the algorithms that will follow. We assume that both $u^-$ and $u^+$ are positive, since otherwise the study of the excursions and meanders becomes trivial. 

\smallskip \noindent \textbf{The direct method.}
The combinatorial definition of walks yields a recurrence relation for $w_{n,k}$:
\begin{equation}
\label{eq:rec}
 	w_{n, k} = \sum_{(1, u)\in S}{w_{n-1, k-u}},
\end{equation}
with initial conditions $w_{n,k} = 0$ if $n,k \le 0$ with $(n,k)\neq(0,0)$, and $w_{0,0} = 1$. If $\tilde{w}_{n,k}$ denotes the number of walks of length $n$ and final altitude $k$ that never exit the upper half plane, then $\tilde{w}_{n,k}$ also satisfies recurrence~\eqref{eq:rec}, but with the additional initial conditions $\tilde{w}_{n,k} = 0$ for all $k<0$. Then the bridges (resp. excursions, meanders) are counted by the numbers $w_{n,0}$ (resp. $\tilde{w}_{n,0}$, $\sum_k\tilde{w}_{n,k}$).

One can compute these numbers by unrolling the recurrence relation~\eqref{eq:rec}. Each use of the recurrence costs $\bigO(d)$ ops., and in the worst case one has to compute $\bigO(dN^2)$ terms of the sequence (for example, if the step set is $S = \{ (1,1), \ldots, (1,d)\} $). This leads to the computation of each of the generating series in $\bigO(d^2N^2)$ ops.

\smallskip \noindent \textbf{Using algebraic equations.}
Another method is suggested in \cite[\S 2.3]{BanderierFlajolet2002}. It relies on the algebraicity of $B$, $E$ and $M$ (Fact~\ref{fact:walks}{\em(2)}). 
The series $E$ and $M$ can be expressed as products in terms of the small branches of the characteristic polynomial $\Gamma_S$ (see \cite[Th. 1, Cor. 1]{BanderierFlajolet2002}). From there, a polynomial equation can be obtained using the Platypus algorithm~\cite[\S2.3]{BanderierFlajolet2002}, which computes a polynomial canceling the products of a fixed number of roots of a given polynomial.  Given a polynomial equation $P(z,E)=0$, another one for~$B$ can be deduced from the relation~$B=zE'/E+1$ as $\Resultant_E((B-1)EP_E+zP_z,P)$.

Once a polynomial equation is known for one of these three series, it can be used to compute a linear recurrence with polynomial coefficients satisfied by its coefficients~\cite{ChudnovskyChudnovsky1986,ChudnovskyChudnovsky1987a,BostanChyzakLecerfSalvySchost2007}. This method produces an algorithm that computes the first~$N$ terms of~$B$, $E$ and~$M$ in $\bigO(N)$ ops. For this to be an improvement over the naive method for large~$N$, the dependence on~$d$ of the constant in the $\bigO()$ should not be too large and the precomputation not too costly.

Indeed, the cost of the pre-computation of an algebraic equation is not negligible. Generically, the minimal polynomial of $E$ has degree $\binom{d}{u^-}$, which may be exponentially large with respect to $d$~\cite{Bousquet2006}. Empirically, the polynomials for $B$ and $M$ are similarly large. 

The situation for differential equations and recurrences is different: $B$ satisfies a differential equation of only polynomial size (see below), whereas (empirically), those for $E$ and $M$ have a potentially exponential size. These sizes then transfer to the corresponding recurrences and thereby to the constant in the complexity of unrolling them.
\begin{example} With the step set $S = \left\{(1,d), (1,1), (1,-d)\right\}$ and $d\ge 2$, the counting series $W_{S}$ equals
\[W_S(x,y) = \frac{y^d}{y^d-x(1+y^{d+1}+y^{2d})}.\]
Experiments indicate that the minimal polynomial of $B_S(x)$ has bidegree $(2d\binom{2d-2}{d-1}, \binom{2d}{d})$, exhibiting an exponential growth in~$d$. On the other hand, they show that $B_S(x)$ satisfies a linear differential equation of order $2d-1$ and coefficients of degree $d^2+3d-2$ for even~$d$, and $d^2+3d-4$ for odd $d$.
\end{example}

\smallskip \noindent {\bf New Method.}
We now give a method that runs in quasi-linear time (with respect to $N$) and avoids the computation of an algebraic equation. Our method relies on the fact that periods of rational functions such as the one in Part~{\em(3)} of Fact \ref{fact:walks} satisfy differential equations of polynomial size in the degree of the input rational function~\cite{BoChChLi10}.
We summarize our results in the following theorem, and then go over the proof in each case individually.

\begin{thm}\label{thm:walks} Let $S$ be a finite set of simple steps and $d=u^-+u^+$.
	The series $B_S$ (resp. $E_S$ and $M_S$) can be expanded at order $N$ in $\bigO(d^2N)$ ops. (resp. $\softO(d^2N)$ ops.), after a pre-computation in $\softO(d^5)$~ops.
\end{thm}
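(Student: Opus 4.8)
The plan is to handle the three series $B_S$, $E_S$, $M_S$ in turn, in each case first establishing a linear differential equation of size polynomial in $d$ during the pre-computation phase, and then using the classical fact that from a differential equation one extracts a linear recurrence with polynomial coefficients whose unrolling costs $\bigO(1)$ ops.\ per coefficient (with the $\bigO()$ constant controlled by the order and degree of the recurrence, hence polynomial in $d$). First I would treat $B_S$: by Fact~\ref{fact:walks}\emph{(3)} it is the constant term $[y^0]W(x,y)$ of the rational function $W(x,y)=1/(1-x\Gamma(y))$, whose denominator has bidegree $\bigO(d)$ in $(x,y)$ after clearing $y^{u^-}$. By the results of~\cite{BoChChLi10} on periods of bivariate rational functions, $B_S$ satisfies a linear differential equation whose order and coefficient degrees are polynomial in $d$; a size count (order $\bigO(d)$, degree $\bigO(d^2)$, as in the Example with step set $\{(1,d),(1,1),(1,-d)\}$) and the complexity bounds of that reference give the pre-computation cost $\softO(d^5)$. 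Unrolling the associated recurrence then yields the first $N$ coefficients of $B_S$ in $\bigO(d^2N)$ ops., the constant $\bigO(d^2)$ coming from the order times the degree of the recurrence.

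Next I would derive the bound for $E_S$ from that for $B_S$ via Fact~\ref{fact:walks}\emph{(4)}: since $E = \exp\!\big(\int (B-1)/x\,\mathrm{d}x\big)$, the logarithmic derivative $E'/E = (B-1)/x$ is D-finite of polynomial size, hence $E_S$ itself satisfies a differential equation of size polynomial in $d$ (one more order than $B$, comparable degrees). The pre-computation — forming the recurrence for $E$ from that of $B$ — again costs $\softO(d^5)$. Here, however, we cannot guarantee a recurrence with bounded \emph{leading} coefficient, so unrolling may require a division at each step or handling apparent singularities; this introduces the soft-O and gives $\softO(d^2N)$ rather than $\bigO(d^2N)$. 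For $M_S$, the total number of meanders of length $n$ is $\sum_k \tilde w_{n,k} = [y^{\ge 0}]$-part evaluated at $y=1$; concretely $M_S(x)$ is, by~\cite[Th.~1]{BanderierFlajolet2002}, a rational expression in $x$ and the small branches of $1-x\Gamma(y)$, and more usefully it too is a period-type integral (the generating function of meanders is $\frac{1}{x}\big(W(x,1)$-related$)$ — one uses that $M$ is a diagonal-like extraction from $W(x,y)\cdot \frac{1}{1-y}$ truncated appropriately). The cleanest route is to observe $M_S$ is D-finite of polynomial size because it is obtained from the D-finite $W(x,y)$ by a positive-part extraction and specialization $y\to1$, operations that preserve D-finiteness with polynomially-bounded size blow-up; then proceed exactly as for $E_S$, obtaining $\softO(d^2N)$ after $\softO(d^5)$ pre-computation.

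The main obstacle I expect is making the size estimates \emph{explicit and polynomial} rather than merely ``some polynomial in $d$'': one must invoke the precise order/degree bounds from~\cite{BoChChLi10} for the differential equation satisfied by the period $[y^0]W(x,y)$, check that the denominator $y^{u^-} - x\,y^{u^-}\Gamma(y)$ has bidegree $(\bigO(d),\bigO(d))$ in the right variables, and then track how the closure operations (integration/exponentiation for $E$, positive-part extraction and specialization for $M$) multiply these bounds — keeping order $=\bigO(d)$ and degree $=\bigO(d^2)$, so that the recurrence has $\bigO(d^2)$ terms and each of the $N$ steps costs $\bigO(d^2)$ arithmetic operations. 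The secondary subtlety is the $M_S$ case: one should argue carefully that the positive-part extraction $\sum_{k\ge0}\tilde w_{n,k}$ can be realized as a residue/contour-integral of a rational function (so that the period machinery of~\cite{BoChChLi10} applies directly), rather than relying on the algebraic product formula of~\cite{BanderierFlajolet2002}, whose size is exactly what we are trying to avoid. Once these bounds are in hand, combining them with Fact~\ref{fact:complexity} (fast power-series operations) and the standard ``D-finite $\Rightarrow$ P-recursive $\Rightarrow$ linear-time unrolling'' pipeline of~\cite{ChudnovskyChudnovsky1986,BostanChyzakLecerfSalvySchost2007} yields the stated complexities and completes the proof.
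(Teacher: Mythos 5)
Your treatment of bridges coincides with the paper's: $B_S = [y^0]W(x,y)$ is a period of the rational function $W(x,y)/y$, whose denominator has bidegree $(1,d)$; creative telescoping from~\cite{BoChChLi10} yields an ODE of order $O(d)$ with coefficients of degree $O(d^2)$ in $\softO(d^5)$ ops., which converts to a recurrence of order $O(d^2)$ whose unrolling gives the stated cost. That part is fine.

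For excursions and meanders there is a genuine gap. You deduce that $E_S$ ``satisfies a differential equation of size polynomial in $d$'' from the fact that $E'/E=(B-1)/x$ is D-finite of polynomial size, but this inference does not hold: the exponential of an antiderivative of a D-finite function is not D-finite in general (think of $\exp(1/(1-x))$), and even when it happens to be D-finite — as here, because $E$ is algebraic — the resulting ODE need not be small. The paper makes exactly the opposite observation: empirically the differential equations for $E$ and $M$ have potentially exponential size in $d$, which is the whole point of the section. So pre-computing a recurrence for $E$ or $M$ and unrolling it would blow up the pre-computation. The paper's route is different: first expand $B\bmod x^{N+1}$ via the small recurrence, then compute $E=\exp\bigl(\int (B-1)/x\,\mathrm{d}x\bigr)\bmod x^{N+1}$ as a power-series exponential (Fact~\ref{fact:complexity}), costing an additional $\softO(N)$ ops.; that exponential, not any recurrence singularity handling, is the source of the soft-$O$ in the statement for $E$ and $M$.

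The meanders case has a second problem. Your suggestion that ``$M_S$ is obtained from $W(x,y)$ by a positive-part extraction and specialization $y\to1$'' is not correct: $W$ enumerates unconstrained walks, while meanders must remain nonnegative at every intermediate step, so $M_S$ is not the positive part of $W(x,\cdot)$ evaluated at $y=1$. You do flag that one should realize the computation as a period of a rational function rather than go through the algebraic product formula, which is the right instinct, but you do not supply the identity that makes this work. The paper's Proposition~\ref{walks-formulas2} provides it: $A(x)=[y^0]\tfrac{y}{1-y}W(x,y)$ is a genuine constant-term period, hence has a polynomial-size telescoper by~\cite{BoChChLi10}, and $M(x)=\exp\bigl(-\int A(x)/x\,\mathrm{d}x\bigr)/(1-x\Gamma(1))$. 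One then expands $A\bmod x^{N+1}$ and takes a power-series exponential, exactly as for $E$. Without this identity, the argument for $M_S$ does not close.
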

\subsection{Fast Algorithms}

\noindent \textbf{Bridges.} To expand $B(x)$, we rely on Fact~\ref{fact:walks}{\em(3)}. The formula can be written $B=(1/2\pi i)\oint{W(x,y)\frac{dy}{y}}$, the integration path being a circle inside a small annulus around the origin~\cite[proof of Th.~1]{BanderierFlajolet2002}. Moreover, $W(x,y)/y$ is of the form $P/Q$, where $\bideg Q \le (1, d)$ and $\bideg P \le (0, d-1)$. Since $P$ and $Q$ are relatively prime and $Q$ is primitive with respect to $y$, Algorithm \textbf{HermiteTelescoping}~\cite[Fig.~3]{BoChChLi10}  computes a telescoper for $P/Q$, which is also a differential equation satisfied by $B$, in  $\softO(d^5)$ ops. The resulting differential equation has order at most~$d$ and degree $O(d^2)$.
This differential equation can be turned into a recurrence of order $O(d^2)$ in quasi-optimal time (see the discussion after~\cite[Cor. 2]{BoSc05}). We may use it to expand $B(x) \bmod x^N$ in $O(d^2N)$ ops, once we have a way to compute the initial conditions. But this can be done using the naive algorithm described above in $\softO(d^4)$ ops. Thus, the total cost of the pre-computation is $\softO(d^5)$, as announced.

\begin{algo}
	Algorithm \textbf{Walks}($S$, $N$)
	
	\begin{algoenv}{A set $S$ of simple steps and an integer $N$}{$B_S,E_S,M_S\bmod x^{N+1}$}
		\State $F \gets W(x,y)/y$ [case $B, E$] or $W(x,y)/(1-y)$ [case $M$]
		\State $D\gets \textbf{HermiteTelescoping}(F)$ \cite[Fig.~3]{BoChChLi10}
		\State $R\gets $ the recurrence of order $r$ associated to $D$
		\State $I\gets [y^0]W(x,y)\bmod x^{r+1}$ [case $B,E$]\\ \qquad $[y^0]yW(x,y)/(1-y)\bmod x^{r+1}$ [case $M$]
		\State $B \gets  [y^0]W(x,y)\bmod x^{N+1}$ (from $R,I$)
		\State $A \gets  [y^0]yW(x,y)/(1-y)\bmod x^{N+1}$ (from $R,I$)
		\State $E\gets \exp\left(\int{(B(x)-1)/x\,\mathrm{d}x}\right)\bmod x^{N+1}$
		\State $M\gets \exp\left(-\int{(A(x)/x)/(1-\Gamma(1)x)}\,\mathrm{d}x\right)\bmod x^{N+1}$
		\State\Return$B,E,M$
	\end{algoenv}
	\caption{Expanding the generating functions\\ of bridges, excursions and meanders}\label{algo:bridges}
\end{algo}

\smallskip \noindent \textbf{Excursions.} If $B(x) \bmod x^{N+1}$ is known, it is then possible to recover $E(x) \bmod x^{N+1}$ thanks to Fact~\ref{fact:walks}{\em(4)}. Expanding $E(x)$ comes down to the computation of the exponential of a series, which can be performed using $\softO(N)$ ops. (Fact~\ref{fact:complexity}{\em(4)}).

\smallskip \noindent \textbf{Meanders.} As in the case of excursions, the logarithmic derivative of $M(x)$ is recovered from a sum of residues by the following.
\begin{prop} \label{walks-formulas2} The  series~$W$ and~$M$ are related through
\[A(x) = [y^0]\frac{y}{1-y}W(x,y),\quad M(x) = \frac{\exp\left(-\int\frac{A(x)}{x}\,\mathrm{d}x\right)}{1-x\Gamma(1)}.\]
\end{prop}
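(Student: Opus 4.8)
The plan is to extract the meander generating function by summing the contributions of all final altitudes $k \ge 0$, and to recognize this sum as a residue (coefficient of $y^{-1}$, equivalently $[y^0]$ after multiplying by the geometric kernel $y/(1-y)$) of the rational function $W(x,y)$. First I would recall from Fact~\ref{fact:walks}{\em(1)} that $W(x,y) = 1/(1-x\Gamma(y)) = \sum_{n\ge 0,k\in\ZZ} w_{n,k}x^n y^k$, where $w_{n,k}$ counts \emph{unconstrained} walks. The meanders are counted by $\sum_{k\ge 0}\tilde w_{n,k}$, with $\tilde w_{n,k}$ the number of walks staying in the upper half-plane; by a standard reflection/kernel argument (or directly from \cite[Th.~1]{BanderierFlajolet2002}) the half-plane-constrained counts are governed by the small branches of $1 - x\Gamma(y)$. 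The key identity to establish is that the ``positive part'' of $W$ in $y$, summed over all $k\ge 0$, has logarithmic derivative (in $x$) expressible through $[y^0]\frac{y}{1-y}W(x,y)$. Concretely, writing $\frac{y}{1-y} = \sum_{k\ge 1} y^k$ and then $[y^0]\bigl(\sum_{k\ge 1}y^k\bigr)\bigl(\sum_j w_{n,j}x^n y^j\bigr) = \sum_{k\ge 1} w_{n,-k}x^n$, so $A(x)$ collects the walks ending at strictly negative altitude. The bridge-type reasoning of \cite{BanderierFlajolet2002} then relates the meander series to $A$ via the identity $M'(x)/M(x) = -A(x)/x - \Gamma(1) \cdot(\text{correction})$, which after integration gives the stated closed form with the prefactor $1/(1-x\Gamma(1))$ accounting for the ``flat'' step contribution at altitude level.

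The steps, in order: (i) expand $W(x,y)/y$ or rather $\frac{y}{1-y}W(x,y)$ and identify $A(x) = \sum_{n}\bigl(\sum_{k\ge 1} w_{n,-k}\bigr)x^n$ as the generating function of unconstrained walks with negative final altitude; (ii) invoke the cycle-lemma / kernel decomposition of \cite[\S2.2, Th.~1]{BanderierFlajolet2002} expressing $M(x)$ as a product over small branches $y_i(x)$ of $1-x\Gamma(y)$, namely $M(x) = \frac{1}{x\Gamma(1)}\bigl(\text{something in the }y_i\bigr)$ or more usefully its logarithmic derivative $xM'(x)/M(x)$ in terms of $\sum_i y_i'(x)/y_i(x)$; (iii) match $\sum_i y_i'/y_i$ against $-A(x)$ by a residue computation: since the $y_i$ are precisely the roots of $Q(x,y) = y^{u^-}(1-x\Gamma(y))$ tending to $0$, the sum $\sum_i \frac{y_i'(x)}{y_i(x)}$ is a sum of residues of $\frac{\partial_x Q}{Q}\cdot\frac{1}{y}$ over the small branches, which one rewrites as a $[y^0]$ of an explicitly rational function; (iv) integrate the resulting logarithmic-derivative identity and fix the constant of integration by evaluating at $x=0$, where $M(0)=1$ and the prefactor $1/(1-x\Gamma(1))$ equals $1$, so no spurious constant appears.

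The main obstacle I expect is step (iii): cleanly converting the sum over small branches of $y_i'/y_i$ into a single coefficient extraction $[y^0]\frac{y}{1-y}W(x,y)$. This requires care because a naive residue sum over \emph{all} roots of $Q$ would also pick up the large branches (those tending to $\infty$ or to nonzero limits), and one must argue — via a contour-deformation / partial-fraction argument as in the derivation of \eqref{eq:diagonal-as-residues}, or via the explicit form $W = 1/(1-x\Gamma(y))$ with $\Gamma$ a Laurent polynomial of width $d$ — that the large-branch contributions either cancel or are absorbed into the $1/(1-x\Gamma(1))$ factor. The bookkeeping of which branches contribute, and the sign, is exactly where the structure of $\Gamma$ (the separation $u^-$, $u^+$, and the value $\Gamma(1) = |S|$) enters; once that is settled, the remaining manipulations are the same routine logarithmic-derivative and exponential-of-integral computations already used for excursions in Fact~\ref{fact:walks}{\em(4)}.
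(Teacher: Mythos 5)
Your skeleton matches the paper's: start from Banderier--Flajolet's product formula for $M$ over the small branches, compute $A$ as a residue sum, and integrate a logarithmic derivative. However, there is a concrete error in steps (ii)--(iii): you aim to match $A$ against $\sum_i y_i'/y_i$, which is the logarithmic derivative of $\prod_i y_i$. The Banderier--Flajolet formula is $M(x)=\frac{1}{1-x\Gamma(1)}\prod_{i=1}^{u^-}(1-y_i)$, so the quantity that must appear is $\sum_i \frac{y_i'}{1-y_i}$, and the identity the paper actually proves is $A(x)=x\sum_{i=1}^{u^-}\frac{y_i'}{1-y_i}$; this yields $\prod_i(1-y_i)=\exp\bigl(-\int A/x\,\mathrm{d}x\bigr)$ directly (both sides equal $1$ at $x=0$ since $y_i(0)=0$), and the prefactor $1/(1-x\Gamma(1))$ is carried along unchanged. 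Had you pushed your plan through with $\sum y_i'/y_i$, the mismatch with the combinatorial $A$ would have surfaced immediately.

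Your concern about large-branch contributions in (iii) is also resolved more cleanly than you anticipate: one does not need $\partial_x Q/Q$ at all. Write $A(x)=[y^0]\frac{y}{1-y}W=\frac{1}{2\pi i}\oint\frac{\mathrm{d}y}{(1-y)(1-x\Gamma(y))}$, the contour being a small circle around $y=0$. For $|x|$ small this circle encloses precisely the $u^-$ small branches (simple zeros of $1-x\Gamma(y)$) and nothing else --- neither the large branches nor the pole at $y=1$. The residue at $y_i$ is $-1/\bigl((1-y_i)\,x\Gamma'(y_i)\bigr)$, and implicit differentiation of $1=x\Gamma(y_i)$ gives $x\Gamma'(y_i)=-1/(xy_i')$, which converts the residue sum into $x\sum_i y_i'/(1-y_i)$ with no further bookkeeping. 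In particular, the factor $1/(1-x\Gamma(1))$ never arises from the contour; it comes independently from the Banderier--Flajolet expression for $M$.
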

\begin{proof}
	Denote by $y_1,\ldots,y_{u^-}$ the small branches of the polynomial $y^{u^-} - xy^{u^-}\Gamma(y)$. Then $M$ is given as~\cite[Cor.~1]{BanderierFlajolet2002}:
  \[M(x)=\frac{1}{1-x\Gamma(1)}\prod_{i=1}^{u^{-}}{(1-y_i)}.\]
  On the other hand,
\begin{multline*}
    A(x) = \frac1{2\pi i} \oint{\frac{W(x,y)}{1-y}\mathrm{d}y}\\
    = \sum_{i=1}^{u^-}{\Residue_{y=y_i(x)}\left(\frac{1}{(1-y)(1-x\Gamma(y))}\right)} 
    =  -\sum_{i=1}^{u^-}{\frac{1}{(1-y_i)x\Gamma '(y_i)}},
  \end{multline*}
  where the integral has been taken over a circle around the origin and the small branches.
	Differentiating the equation $1-x\Gamma(y)$ = 0 with respect to $x$ leads to
$-x\Gamma '(y_i) = {1}/({xy_i'})$, whence
	$A(x) = x\sum_{i=1}^{u^{-}}{{y_i'}/({1-y_i})}.$ Therefore, $\prod ( 1- y_i) = \exp ( - \int A/x\,\mathrm{d}x) )$, finishing the proof.
\end{proof}
Thus we apply the same method as in the case of the excursions. We first compute a differential equation for $A(x)$ using the method of~\cite{BoChChLi10}. The computation of the initial conditions for $A$ can also be performed naively from its definition as a constant term, by simply expanding $yW(x,y)/(1-y)$. The formula of the proposition then recovers $M(x)$. The complexity analysis goes exactly as in the previous case, giving a global cost of $\softO(d^5)$ ops.

{\bf Acknowledgements.} This work has been supported in part by FastRelax ANR-14-CE25-0018-01.

\bibliographystyle{abbrv}

\begin{scriptsize}

\end{scriptsize}

\end{document}